\newcommand{\domain}{G}
\newcommand{\Log}{\mathrm{Log}}
\newcommand{\R}{\mathrm{Re}}
\newcommand{\I}{\mathrm{Im}}
\newcommand{\cc}{\mathrm{c.c.}}
\newcommand{\diag}{\mathrm{diag}}
\def\Xint#1{\mathchoice
{\XXint\displaystyle\textstyle{#1}}%
{\XXint\textstyle\scriptstyle{#1}}%
{\XXint\scriptstyle\scriptscriptstyle{#1}}%
{\XXint\scriptscriptstyle\scriptscriptstyle{#1}}%
\!\int}
\def\XXint#1#2#3{{\setbox0=\hbox{$#1{#2#3}{\int}$ }
\vcenter{\hbox{$#2#3$ }}\kern-.6\wd0}}
\def\dashint{\Xint-}
\begin{document}

\title{Direct Scattering for the Benjamin-Ono Equation with Rational Initial Data}

\author[P. D. Miller and A. N. Wetzel]{Peter D. Miller and Alfredo N. Wetzel\thanks{Address for
correspondence:  Alfredo N. Wetzel, Department of Mathematics, University of Michigan, East Hall, 530 Church St., Ann Arbor, MI  48109, USA; e-mail:  \texttt{wreagan@umich.edu}}}
\affil{University of Michigan}

\maketitle

\begin{abstract}
We compute the scattering data of the Benjamin-Ono equation for arbitrary rational initial conditions with simple poles.
Specifically, we obtain explicit formulas for the Jost solutions and eigenfunctions of the associated spectral problem, yielding an Evans function for the eigenvalues and formulas for the phase constants and reflection coefficient.
\end{abstract}

\section{Introduction}\label{introduc}

The Cauchy problem of the Benjamin-Ono (BO) equation
\begin{equation}\label{BOeqn}
\frac{\partial u}{\partial t} + 2 u \frac{\partial u}{\partial x} + \epsilon \mathcal{H}\left[ \frac{\partial^2 u}{\partial x^2} \right] =0, \quad u(x,0) = u_0(x), \quad x \in \mathbb{R}
\end{equation}
describes the weakly nonlinear evolution of one-dimensional internal gravity waves in a stratified fluid \cite{Benjamin67, DavisAcrivos67, Ono75}. The variable $u$ corresponds to the wave profile, $\epsilon$ is a positive real number that quantifies the effects of dispersion, and the operator $\mathcal{H}$ denotes the Hilbert transform defined by the Cauchy principal value integral
\begin{equation}
\mathcal{H}[u](x,t) := \frac{1}{\pi} \dashint_{-\infty}^{\infty} \frac{u(\xi,t)}{\xi-x}\; d\xi.
\end{equation}

The Cauchy problem \eqref{BOeqn} is solvable, for suitable $u_0$ decaying as $x\to\infty$, by an inverse-scattering transform (IST) associated to a spectral problem (cf.\@ \eqref{Lax1}) in which $u_0$ appears as a potential.  The aim of this paper is to construct explicitly the Jost solutions and ``bound state'' eigenfunctions for this spectral problem with generic  (simple poles) rational $u_0$, and to deduce therefrom the corresponding scattering data.
This work was inspired by a brief note in a paper of Kodama, Ablowitz, and Satsuma  \cite{KodamaAblowitzSatsuma82} on the intermediate long wave (ILW) equation from which BO arises as a limiting case.
In \cite{KodamaAblowitzSatsuma82}, the spectral problem \eqref{Lax1} is analyzed for a ``Lorentzian'' potential ($u_0(x) = 2\nu (1+x^2)^{-1}$ for $\nu\in\mathbb{R}$) and it is shown that the eigenvalues are, under some conditions on $\nu$, the roots of certain Laguerre polynomials.
While some complications and essential difficulties in extending this remarkable calculation
were highlighted by Xu \cite{Xuthesis}, we can now report that a vast generalization is indeed possible.  

It is plausible that a method capable of treating a general class of rational potentials can subsequently be extended to arbitrary potentials by an appropriate density argument.
Moreover, our results allow for the analysis of the scattering data in the zero-dispersion limit ($\epsilon \to 0$) \cite{upcoming-paper2}.

\section{Background}\label{Backgr}

The BO equation is the compatibility condition for a Lax pair \cite{BockKruskal79,Nakamura79b}, and a corresponding IST for \eqref{BOeqn} was first found by Fokas and Ablowitz \cite{FokasAblowitz83} with key simplifications accounting for reality of the potential given later by Kaup and Matsuno \cite{KaupMatsuno98}.  The first step of the IST solution of \eqref{BOeqn} is the association of $u_0$ with a set of \emph{scattering data} via the spectral problem 
\begin{equation}\label{Lax1}
i \epsilon \frac{d w^{+}}{d x} + \lambda (w^{+} - w^{-}) = -u_0 w^{+},\quad x\in\mathbb{R},
\end{equation}
where 
$\lambda \in \mathbb{C}$ is a spectral parameter and the superscripts $\pm$ denote the boundary values of the function $w(x)$, analytic and bounded on $\mathbb{C} \backslash \mathbb{R}$, from the upper- and lower-half $x$-plane on the real line.

Recall the Cauchy operators $\pm\mathcal{C}^{\pm}$ defined by the singular integrals
\begin{equation}\label{Cpmdef1}
\mathcal{C}^{\pm}[\varphi](x) := \lim_{\delta \downarrow 0} \frac{1}{2 \pi i} \int_{-\infty}^{\infty} \frac{\varphi(y)}{y - (x \pm \delta i)} \, dy. \end{equation}
The Cauchy operators $\pm \mathcal{C}^{\pm}$ are bounded operators on $L^2(\mathbb{R})$ and are complementary (due to the  \emph{Plemelj formula} $\mathcal{C}^{+} - \mathcal{C}^{-} = \mathbb{I}$) orthogonal projections  from $L^2(\mathbb{R})$ onto its Hardy subspaces $\mathbb{H}^{\pm}(\mathbb{R})$. 
For details on singular integrals see \cite{Gakhov66,Muskhelishvili53}. Using the Plemelj formula and Liouville's Theorem,  \eqref{Lax1} is projected into the equation
\begin{equation} \label{w+bounded}
i \epsilon \frac{d w^{+}}{d x} + \lambda \left(w^{+} - w_0 \right) = -\mathcal{C}^{+} \left[u_0 w^{+}\right],
\end{equation}
where $w_0$ is a suitable constant \cite{Xuthesis}.  Thus $w^-$ has been eliminated, and  \eqref{w+bounded} is related to the eigenvalue problem $\mathcal{L}w^+=\lambda w^+$ for the self-adjoint operator $\mathcal{L}:=-i\epsilon\partial_x-\mathcal{C}^+u_0\mathcal{C}^+$ acting on $\mathbb{H}^+(\mathbb{R})$ having essential spectrum $\mathbb{R}^+$ and possibly some negative real point spectrum \cite{ReedSimon4}.

To calculate the scattering data of \eqref{w+bounded} one first obtains the 
\emph{Jost solutions}.
These are certain solutions of \eqref{w+bounded} (also analytic and bounded for $\I\{x\}>0$) that are well-defined for $\lambda>0$; 
in the literature  \cite{FokasAblowitz83, KaupMatsuno98, AblowitzFokasAnderson1983, SantiniAblowitzFokas1984} they are denoted by $w^+=M(x;\lambda)$ and $\overline{N}(x;\lambda)$ (for $w_0=1$) and $w^+=\overline{M}(x;\lambda)$ and $N(x;\lambda)$ (for $w_0=0$).  They may be characterized for fixed $\lambda>0$ via their asymptotic behavior for large real $x$ as follows:
\begin{equation}\label{MNbds}
\begin{split}
M \to 1, \; \overline{M} e^{-i \lambda x/\epsilon} \to 1 \quad &\text{as} \quad x \to -\infty \\
\overline{N} \to 1, \; N e^{-i \lambda x/\epsilon} \to 1 \quad &\text{as} \quad x \to +\infty. 
\end{split}
\end{equation}
The Jost solutions solve (second kind) Fredholm integral equations \cite{FokasAblowitz83}. 
It is then possible to show that $M$, $N$, and $\overline{N}$ satisfy the \emph{scattering relation}
\begin{equation} \label{MNbarrelation1}
M(x;\lambda) - \overline{N}(x;\lambda) = \beta(\lambda)N(x;\lambda),\quad\lambda\in\mathbb{R}^+,
\end{equation}
determining a function $\beta:\mathbb{R}^+\to\mathbb{C}$ (independent of $x$), called the \emph{reflection coefficient}. 

We next introduce the ``bound state'' \emph{eigenfunction} $w^+=\Phi_j(x) \in \mathbb{H}^{+}(\mathbb{R})$ satisfying \eqref{w+bounded} with $w_0=0$ for a given \emph{eigenvalue} $\lambda =\lambda_j<0$ and normalized by the condition
\begin{equation}\label{Phij_bds}
x\Phi_j(x) \to 1
\quad \text{as} \quad
|x| \to \infty \quad \text{(uniformly for  $\I\{x\}\ge 0$)},
\end{equation}
or equivalently, as can be shown asymptotically from \eqref{w+bounded}, 
\begin{equation}\label{norm_eigfunc}
\frac{1}{2 \pi i} \int_{-\infty}^{\infty} u_0(x) \Phi_j(x) \, dx = \lambda_j.
\end{equation}

Importantly, while the Jost solutions are initially defined only for $\lambda >0$, $M$ and $\overline{N}$ have analytic extensions into the complex $\lambda$-plane while in general $\overline{M}$ and $N$ do not \cite{FokasAblowitz83}.
Indeed, for each fixed $x$ with $\I\{x\}\ge 0$, $M$ and $\overline{N}$ can be shown to be the boundary values on $\mathbb{R}^{+}$ from the upper- and lower-half $\lambda$-planes respectively of a single function $W$ meromorphic on $\mathbb{C} \setminus \mathbb{R}^{+}$. 
The function $W$ has as its only singularities in $\lambda$ on $\mathbb{C} \setminus \mathbb{R}^{+}$ a discrete set of simple poles located precisely at the eigenvalues $\lambda = \lambda_j<0$; see \cite[Theorem 2.1]{AndersonTaflin1985}.
We will henceforth use the notation $W_{+}$ (resp., $W_-$) to denote the Jost solution $M$ (resp., $\overline{N}$). 
(For the inverse theory, it is then important that $N(x;\lambda)$ may be fully eliminated from \eqref{MNbarrelation1} to yield a \emph{nonlocal jump condition} relating the boundary values $W_\pm$ of $W$ across its branch cut $\mathbb{R}^+$.)
The meromorphic function $W$ then has a Laurent expansion about each eigenvalue $\lambda_j$ of the form:
\begin{equation}\label{LaurentW}
W(x;\lambda) = -i \epsilon \frac{\Phi_j(x)}{\lambda - \lambda_j} + \left( x + \gamma_j \right) \Phi_j(x) + O(\lambda - \lambda_j)
\end{equation}
where $\gamma_j\in\mathbb{C}$ is the \emph{phase constant} associated with the eigenvalue $\lambda_j$.

\begin{definition}[BO scattering data of the potential $u_0$]
\label{SD_BO_problem} 
The scattering data associated with $u_0$ are as follows.
\begin{itemize}
\item Reflection coefficient $\beta(\lambda)$ for $\lambda \in \mathbb{R}^{+}$: defined by \eqref{MNbarrelation1}; see also \eqref{beta_formula}.
\item Eigenvalues $\{\lambda_j<0\}_{j=1}^{N}$: determined from the spectral problem \eqref{w+bounded}.
\item Phase constants $\{\gamma_j\}_{j=1}^{N}$: defined in terms of the eigenvalues and the corresponding normalized eigenfunctions by \eqref{LaurentW}.
\end{itemize}
\end{definition}

\section{Direct scattering for rational potentials}\label{DirScatr}

\subsection{Rational Potentials}\label{subsec:RatIC}

A real bounded potential $u_0$ that is rational with simple poles and that decays as $x\to\pm\infty$ necessarily has the form:
\begin{equation}\label{u0_def_frac}
u_0(x) = \sum_{p=1}^{P} \frac{c_p}{x-z_p}+ \cc,
\end{equation}
where $\{c_p\}_{p=1}^P$ are nonzero complex numbers and the poles $\{z_p\}_{p=1}^{P}$ with $\I\{z_p\}>0$ have distinct real parts increasing with $p$.
We impose in addition the condition  $\sum_{p =1}^{P} \left( c_p + c_p^* \right)=0$,
ensuring $u_0\in L^1(\mathbb{R})$.  

Let $f(x)$ be the particular anti-derivative of $u_0(x)$ given by 
\begin{equation} \label{f_eqn}
f(x) := \sum_{p=1}^{P}  c_p \left( \Log \left( i (x-z_p)\right)  +\frac{\pi i}{2} \right)+ \cc,\quad
f'(x)=u_0(x),
\end{equation}
where $\Log(\cdot)$ is the principal branch ($|\I\{\Log(\cdot)\}|<\pi$).
We denote by $\domain$ the domain of analyticity of $f$ as illustrated in Figure~\ref{fig:fdomain}.
Given $R>0$ sufficiently large, we define subdomains $\domain_p$ of $\domain\cap\{|x|>R\}$, $p=1,\dots,P-1$,
consisting of points lying between the branch cuts emanating from the points $z_{p}$ and $z_{p+1}$.  Then we define $\domain_{0}$ as the half-plane $\R\{x\}<\R\{z_1\}$ and $\domain_P$ as the half-plane $\R\{x\}>\R\{z_P\}$.
\begin{figure}[h!]
\begin{center}
\includegraphics[scale=1]{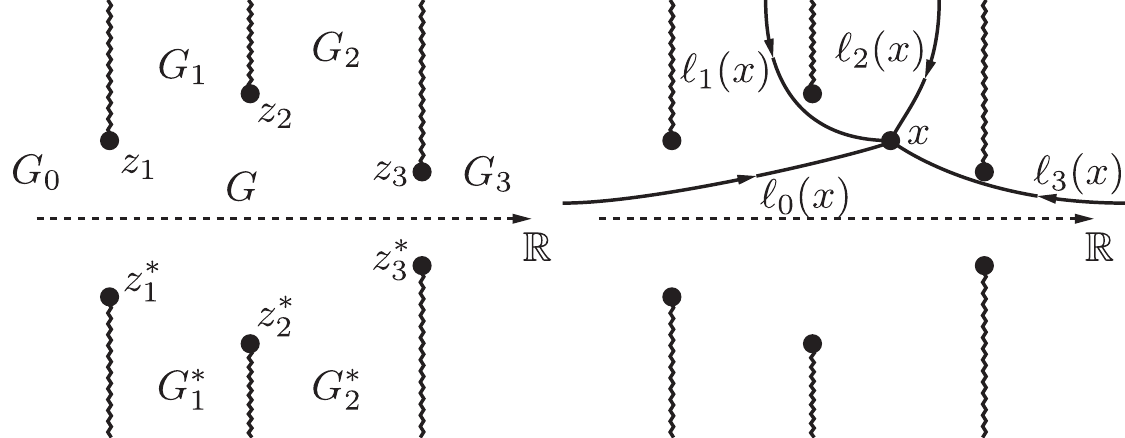}
\caption{
Left: a domain $\domain$ with subdomains $\domain_p$ and $\domain_p^*$ in the $x$-plane for a potential $u_0$ of the form \eqref{u0_def_frac} with $P=3$.  The zig-zagged half-lines denote the (logarithmic) branch cuts of $f(x)$. Right: contours $\ell_m(x)$ in $\domain$ originating at $\infty$ in the subdomains $\domain_m$ and terminating at $x\in\domain$.
} \label{fig:fdomain}
\end{center}
\end{figure}
\begin{lemma}\label{prop_f_to_infty}
For any $m=0,1,\dots,P$, we have
\begin{equation}
\lim_{x\to\infty\in G_m}f(x)=2\pi i\sum_{p=1}^{m} c_p
\quad \text{and} \quad
\lim_{x\to\infty\in G_m^*}f(x)=-2\pi i\sum_{p=1}^{m} c_p^*.
\label{eq:f-asymptotic}
\end{equation}
\end{lemma}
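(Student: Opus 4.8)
The plan is to compute the limit directly from the explicit formula \eqref{f_eqn} for $f(x)$, tracking how the principal branch of each logarithm behaves as $x\to\infty$ within a fixed subdomain $\domain_m$ (or its conjugate $\domain_m^*$). First I would observe that, because $u_0\in L^1(\mathbb{R})$ forces $\sum_{p=1}^P(c_p+c_p^*)=0$, the slowly varying part of $f$ cancels: writing $\Log(i(x-z_p))+\tfrac{\pi i}{2} = \Log\bigl(x-z_p\bigr) + \pi i + (\text{branch corrections})$, the $\Log x$ contributions sum with coefficient $\sum_p c_p + \cc = 0$ and so disappear in the limit. What survives is entirely a bookkeeping of branch jumps: as $x$ recedes to infinity inside $\domain_m$, the ray from $x$ back to the pole $z_p$ crosses (or does not cross) the vertical branch cut emanating upward from $z_p$ depending on whether $\R\{z_p\} < \R\{x\}$, i.e.\ whether $p \le m$ or $p > m$.

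The key step is therefore to determine, for each $p$, the value of $\Log(i(x-z_p)) + \tfrac{\pi i}{2}$ as $x\to\infty$ in $\domain_m$. The branch cut of $\Log(i(x-z_p))$ in the $x$-variable is the set where $i(x-z_p)\in(-\infty,0)$, i.e.\ the vertical half-line $x = z_p + is$, $s>0$ — exactly the cut drawn in Figure~\ref{fig:fdomain}. For $x\to\infty$ with $\R\{x\} < \R\{z_p\}$ (this is the case $p>m$, so $x$ lies to the left of the cut), one checks that $\arg(i(x-z_p)) \to 0$, giving $\Log(i(x-z_p)) \to +\infty$ (real) and the bracket contributes, after combining with its complex conjugate and the $\Log x$ cancellation, a net $0$. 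For $x\to\infty$ with $\R\{x\} > \R\{z_p\}$ (the case $p\le m$, $x$ to the right of the cut), $\arg(i(x-z_p)) \to -2\pi$ relative to the left-side value — equivalently the bracket $\Log(i(x-z_p)) + \tfrac{\pi i}{2}$ differs by $-2\pi i$ from its $p>m$ counterpart — so this term contributes $-2\pi i c_p$ before conjugation; but the claimed answer is $+2\pi i\sum_{p\le m}c_p$, so I must be careful about the overall sign convention and the orientation of the cut, and I would fix it by testing against Lemma-consistency at $m=0$ and $m=P$: at $m=0$ the limit should be $0$ (all poles to the right, $\domain_0$ is the far left half-plane) and at $m=P$ it should be $2\pi i\sum_{p=1}^P c_p$, which, combined with $\sum(c_p+c_p^*)=0$, is consistent with the conjugate statement giving $-2\pi i\sum c_p^* = 2\pi i\sum c_p$ — a useful internal check that the two halves of \eqref{eq:f-asymptotic} agree where the domains overlap at infinity is not available, but the $m=0$ and $m=P$ endpoints pin down the convention.

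The conjugate statement $\lim_{x\to\infty\in\domain_m^*} f(x) = -2\pi i\sum_{p=1}^m c_p^*$ follows by the identical argument applied to the $\cc$ terms $c_p^*/(x-z_p^*)$, whose antiderivative pieces $\Log(i(x - z_p^*)) + \tfrac{\pi i}{2}$ — wait, more precisely $f$'s conjugate part is $\overline{\sum_p c_p(\Log(i(\bar x - z_p)) + \tfrac{\pi i}{2})}$ evaluated along real $x$ and analytically continued, whose cuts are the downward vertical rays from $z_p^*$; repeating the ray-crossing count for $x\to\infty$ in $\domain_m^*$ (lower half-plane, poles $z_p^*$ with $p\le m$ now to the left) produces the stated sign. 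The main obstacle I anticipate is purely one of sign and branch conventions: correctly pinning down $\arg$ of $i(x-z_p)$ relative to the principal branch as $x$ ranges over the (possibly curved) subdomain $\domain_m$ near infinity, and making sure the "$\pi i/2$" offset in \eqref{f_eqn} is accounted for consistently so that the branch jumps land on integer multiples of $2\pi i$ and the $L^1$ condition does its job. Everything else is routine asymptotics of $\Log$.
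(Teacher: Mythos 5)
Your approach---computing the limit directly from \eqref{f_eqn}, letting the condition $\sum_p(c_p+c_p^*)=0$ kill the divergent $\ln|x|$ (and direction-dependent argument) part, and then counting a branch contribution of $2\pi i c_p$ for each cut that $x$ lies to the right of---is exactly the computation the paper has in mind when it says the lemma follows immediately from \eqref{f_eqn}. But the decisive step is carried out with the wrong sign. Just to the right of the cut of $\Log(i(x-z_p))$ (the upward vertical ray from $z_p$) one has $i(x-z_p)=-s+i\delta$ with $s>0$ and $\delta>0$ small, so $\arg(i(x-z_p))\approx+\pi$, while just to the left $\arg(i(x-z_p))\approx-\pi$: the right-hand determination exceeds the left-hand one by $+2\pi$, so each $p\le m$ contributes an extra $+2\pi i c_p$, not $-2\pi i c_p$ as you state. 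Relatedly, your claim that $\arg(i(x-z_p))\to 0$ for the left-of-cut poles is false in the subdomains $\domain_m$ with $1\le m\le P-1$, where $x\to\infty$ forces $\I\{x\}\to+\infty$: there $\arg(i(x-z_p))\to-\pi$ for $p>m$, $\to+\pi$ for $p\le m$, and, writing the $\cc$ part of \eqref{f_eqn} as $c_p^*\bigl(\Log(-i(x-z_p^*))-\pi i/2\bigr)$, one has $\arg(-i(x-z_p^*))\to 0$ for every $p$, whence
\begin{equation*}
f(x)=\sum_{p=1}^P(c_p+c_p^*)\left(\ln|x|-\tfrac{i\pi}{2}\right)+2\pi i\sum_{p=1}^{m}c_p+o(1)\;\longrightarrow\;2\pi i\sum_{p=1}^m c_p .
\end{equation*}
(Note also that the ``net $0$'' you assign to each $p>m$ term holds only in aggregate: each such pair contributes $(c_p+c_p^*)(\ln|x|-i\pi/2)$, which vanishes only after summing over all $p$.)

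The more serious defect is that your proposed way of settling the sign---``testing against Lemma-consistency at $m=0$ and $m=P$''---is circular and, worse, non-probative. At $m=0$ both signs give $0$, and since $\R\{\sum_p c_p\}=0$ the internal consistency of the two halves of \eqref{eq:f-asymptotic} at $m=P$ holds for either sign: $\pm2\pi i\sum_p c_p=\mp2\pi i\sum_p c_p^*$. So the check you describe cannot distinguish $+2\pi i\sum_{p\le m}c_p$ from $-2\pi i\sum_{p\le m}c_p$; the sign must come from the branch bookkeeping itself, as above, or from an independent evaluation such as $x\to+\infty$ along the positive real axis (which lies in $\domain_P$), where $\arg(i(x-z_p))\to+\pi/2$ and $\arg(-i(x-z_p^*))\to-\pi/2$ give $f(x)\to i\pi\sum_p(c_p-c_p^*)=2\pi i\sum_p c_p$. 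With that correction, the rest of your outline---including the conjugate-domain statement by the mirror-image argument in the lower half-plane---goes through.
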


\begin{proof}
This follows immediately from formula \eqref{f_eqn}.
\end{proof}
The advantage of considering potentials $u_0$ of the form \eqref{u0_def_frac}, as first noted in \cite{KodamaAblowitzSatsuma82}, is that upon applying the Plemelj formula to the right-hand side of \eqref{w+bounded}, the expression $\mathcal{C}^-[u_0w^+]$ can be evaluated by residues because $w^+$ is analytic and bounded for $\I\{x\}>0$.  Therefore,
\begin{equation}
i\epsilon\frac{dw^+}{dx}+\lambda w^++u_0w^+=\lambda w_0 + \sum_{p=1}^P\frac{c_pw^+(z_p;\lambda)}{x-z_p}.
\label{w+bounded-rational}
\end{equation}
As the $x$-dependence on the right-hand side is explicit and elementary this first-order equation can be solved in closed-form, at least assuming $w^+(z_p;\lambda)$ is known for all $p$.

\subsection{The Jost solution $M=W_+$ and reflection coefficient $\beta$}\label{subsec:Jost}

Let $\lambda>0$.  To obtain a formula for the Jost solution $W_+(x;\lambda)$ we take $w_0=1$ in \eqref{w+bounded-rational}, apply the boundary condition $W_+(x;\lambda)\to 1$ as $x\to -\infty$, and integrate to obtain the convergent improper integral representation:
\begin{equation}\label{MJost_eqn}
W_{+}(x;\lambda) 
= - \frac{i}{\epsilon} e^{ih(x;\lambda)/\epsilon } \int_{-\infty}^{x}  e^{-ih(z;\lambda)/\epsilon} \left(  \lambda + \sum_{p=1}^{P}   \frac{v_p(\lambda)}{z-z_p}\right) dz,
\end{equation}
where $h(x;\lambda):= \lambda x + f(x)$ and  $v_p(\lambda):=c_pW_+(z_p;\lambda)$ for $p=1,\dots,P$.
Note that by Lemma~\ref{prop_f_to_infty}, $h(x;\lambda)$ is dominated by $\lambda x$ for large $|x|$ provided $\lambda\neq 0$.  Since $\lambda>0$, the integral can therefore be made absolutely convergent by rotation of the contour at infinity into the lower-half $x$-plane.  A similar representation can be obtained for $\overline{N}(x;\lambda)=W_-(x;\lambda)$. 

While $W_+(x;\lambda)$ given by \eqref{MJost_eqn} satisfies the differential equation \eqref{w+bounded-rational} and is analytic for all $x$ near $\mathbb{R}$, extra conditions are required to ensure analyticity for all $x$ with $\I\{x\}>0$.
Indeed, the factor 
\begin{equation}\label{pole_form}
e^{ih(x;\lambda)/\epsilon} = e^{i \lambda x/\epsilon} \prod_{p=1}^{P} \left( i(x-z_p)\right)^{ic_p/\epsilon} \left( i^* (x-z_p^*) \right)^{ic_{p}^*/\epsilon} e^{\pi(c_p^* - c_p)/(2\epsilon)}
\end{equation}
appearing in \eqref{MJost_eqn} may have singularities for $\I\{x\}>0$.
Requiring analyticity of $W_+(x;\lambda)$ imposes constraints on the quantities $\{ v_p(\lambda) \}_{p=1}^{P}$, thus eliminating them entirely from \eqref{MJost_eqn}.

\begin{definition}
\label{defn:ell_m}
$\ell_m(x)$ denotes any member of the equivalence class of contours in $\domain$ originating at $\infty$ in the subdomain $\domain_m$ and ending at the point $x\in \domain$; see Figure~\ref{fig:fdomain}.
\end{definition}

\begin{definition}\label{defn_Uplus}
$U_m^>$ denotes any member of the equivalence class of contours on the Riemann surface of $f$ originating at $-i\infty$ in $\domain_0$ with value $f(\infty)=0$ and ending at $-i\infty$ in $\domain_0$ having encircled the singularities $z_1,\dots,z_m$ each exactly once in the positive sense; see Figure~\ref{fig:Upos_fig}.
\end{definition}
\begin{figure}[h!]
\begin{center}
\includegraphics[scale=1]{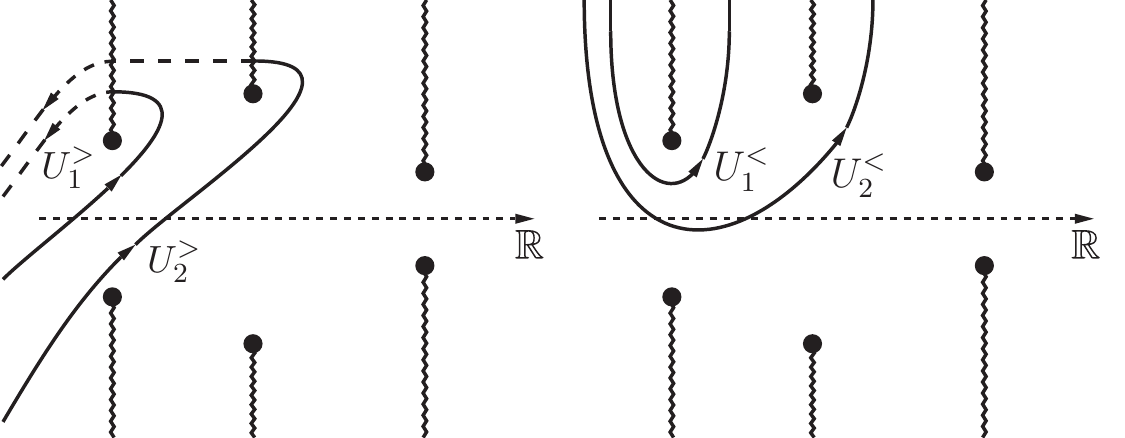}
\caption{
Left: contours $U_{m}^{>}$ for the configuration illustrated in Figure~\ref{fig:fdomain}.
Right: contours $U_{m}^{<}$ for the same configuration.
} \label{fig:Upos_fig}
\end{center}
\end{figure}

\begin{proposition} \label{analyticity1_M}
Let $\lambda>0$.  For each $m=1,\dots, P$, define the function $H_m=H_m(z;\lambda)$, $z\in\domain$, by the equation 
\begin{equation}\label{def_Hj}
 (i(z-z_m))^{-ic_m/\epsilon-1} H_m(z;\lambda)
:= - e^{-ih(z;\lambda)/\epsilon }\left( \lambda +  \sum_{p=1}^{P} \frac{v_p(\lambda)}{z-z_p}\right).
\end{equation}
(Note that $H_m$ is analytic not only on $\domain$, but also in a neighborhood of the point $z_m$.)
Also, let $i c_m = \epsilon(\mu_m +\omega_m)$, where $\mu_m = [\R\{ic_m/\epsilon\}]$
denotes the integer (floor) part of $\R\{ic_m/\epsilon\}$
and the remainder satisfies $0\leq \R\{\omega_m\} < 1$. Then, the function $W_{+}(x;\lambda)$ given by \eqref{MJost_eqn} is analytic in the upper-half $x$-plane if and only if for each $m = 1,\ldots, P$, either
\begin{equation}\label{lzj_cond}
\int_{\ell_{0}(z_m)} (i(z-z_m))^{-i c_m/\epsilon-1} H_m(z;\lambda) \; dz =0
\quad \text{when} \quad
\R\{ic_m/\epsilon\} < 0, 
\end{equation}
\begin{multline} \label{lzj_der_cond}
\int_{\ell_{0}(z_m)} \left( i(z-z_m )\right)^{-\omega_m} \frac{d^{\mu_m+1}}{d z^{\mu_m+1}} H_m(z;\lambda) \; dz =0\\
 \text{when} \quad
\R\{ ic_m/\epsilon\} \geq 0
\quad\text{and}\quad\omega_m \neq 0, 
\end{multline}
or
\begin{equation}\label{lzj_Log_cond}
\mathop{\mathrm{Res}}_{z=z_m}\frac{H_m(z;\lambda)}{(z-z_m)^{\mu_m+1}}
 =0
\quad \text{when} \quad
\R\{ ic_m/\epsilon\} \geq 0
\quad \text{and} \quad \omega_m = 0,
\end{equation}
where $\ell_0(z_m)$ (see Definition~\ref{defn:ell_m}) originates in at $-i\infty$ in $\domain_0$.
\end{proposition}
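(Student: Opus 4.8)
The plan is to reduce the analyticity of $W_+$ on $\{\I\{x\}>0\}$ to a local question at each pole $z_1,\dots,z_P$ of $u_0$, and then to resolve that question directly from the explicit formula \eqref{MJost_eqn}. For the reduction, observe that since $\lambda>0$ the contour in \eqref{MJost_eqn} may be rotated at $-\infty$ into the lower half-plane, so that the integral runs from $-i\infty\in G_0$ to $x$ along a path in $G$; as $h$ is analytic on $G$ and the rational factor $\lambda+\sum_p v_p(\lambda)/(z-z_p)$ has all of its poles at the points $z_p\notin G$, the integrand is analytic on $G$ and, $G$ being connected and simply connected, $W_+$ is a single-valued analytic function there. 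The set $\{\I\{x\}>0\}\setminus G$ consists precisely of the vertical branch rays running upward from the $z_m$ together with the $z_m$ themselves, so it suffices to show that $W_+$ extends analytically across each $z_m$: once this holds, the two boundary values of $W_+$ on a given ray agree near $z_m$, hence agree all along the ray by analytic continuation in the parameter along it, so $W_+$ extends across the rays as well.

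For the local analysis near a fixed $z_m$ I would combine two structural facts. First, \eqref{pole_form} gives $e^{ih(x;\lambda)/\epsilon}=(i(x-z_m))^{ic_m/\epsilon}\widetilde G_m(x)$ on a small disk $D$ about $z_m$, with $\widetilde G_m$ analytic and nonvanishing on $D$. Second, by \eqref{def_Hj} the integrand of \eqref{MJost_eqn} equals $-(i(z-z_m))^{-ic_m/\epsilon-1}H_m(z)$ with $H_m$ analytic on $D$. Splitting the integral as $\int_{\ell_0(z_m)}+\int_{z_m}^x$ and integrating the Taylor series of $H_m$ term by term on the second piece, I expect to obtain, on $D$ minus the ray,
\[
W_+(x;\lambda)=-\frac{i}{\epsilon}\,\widetilde G_m(x)\Big[\,\mathcal B_m(\lambda)\,(i(x-z_m))^{ic_m/\epsilon}+A_m(x)\,\Big],
\]
with $A_m$ analytic on $D$ and $\mathcal B_m(\lambda)$ a constant, whenever $ic_m/\epsilon\notin\mathbb Z_{\ge0}$; whereas if $ic_m/\epsilon=\mu_m\in\mathbb Z_{\ge0}$ there is in addition a term proportional to $(i(x-z_m))^{\mu_m}\Log(i(x-z_m))$ whose coefficient is a nonzero multiple of the $(z-z_m)^{\mu_m}$-Taylor coefficient of $H_m$ at $z_m$, i.e.\ of $\Res_{z=z_m}[H_m(z)(z-z_m)^{-\mu_m-1}]$. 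Because $(i(x-z_m))^{ic_m/\epsilon}$ is multivalued about $z_m$ if $ic_m/\epsilon\notin\mathbb Z$ and has a pole there if $ic_m/\epsilon\in\mathbb Z_{<0}$, and $(i(x-z_m))^{\mu_m}\Log(i(x-z_m))$ is multivalued, in every case $W_+$ extends analytically across $z_m$ if and only if the displayed obstruction vanishes. This already proves \eqref{lzj_Log_cond} in the case $\R\{ic_m/\epsilon\}\ge0$, $\omega_m=0$.

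It remains to identify $\mathcal B_m(\lambda)$ with the integrals in \eqref{lzj_cond} and \eqref{lzj_der_cond}. Since the piece $\int_{z_m}^x$ feeds only into $A_m$, the constant $\mathcal B_m(\lambda)$ equals, up to a nonzero factor, the integral of $(i(z-z_m))^{-ic_m/\epsilon-1}H_m(z)$ over $\ell_0(z_m)$; when $\R\{ic_m/\epsilon\}<0$ this converges as written, giving \eqref{lzj_cond}. When $\R\{ic_m/\epsilon\}\ge0$ and $\omega_m\ne0$, I would first integrate by parts $\mu_m+1$ times on the contour from $-i\infty\in G_0$ to $x$ (before splitting off $\int_{z_m}^x$), differentiating $H_m$ and integrating the power; this is legitimate because $H_m$ and all of its derivatives decay exponentially as $z\to-i\infty$ in $G_0$, by the factor $e^{-i\lambda z/\epsilon}$ in the integrand ($\lambda>0$) together with Lemma~\ref{prop_f_to_infty}. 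After the $\mu_m+1$ steps the exponent is raised to $-ic_m/\epsilon-1+(\mu_m+1)=-\omega_m$ with $\R\{-\omega_m\}\in(-1,0]$, so that $\int_{\ell_0(z_m)}(i(z-z_m))^{-\omega_m}H_m^{(\mu_m+1)}(z)\,dz$ converges; the boundary terms at $-i\infty\in G_0$ vanish by the same decay, while those at the finite endpoint, of the form $(i(x-z_m))^{j-ic_m/\epsilon}H_m^{(j)}(x)$ for $0\le j\le\mu_m$, combine with the prefactor $(i(x-z_m))^{ic_m/\epsilon}$ into analytic contributions that pass into $A_m$. Hence $\mathcal B_m(\lambda)$ is a nonzero multiple of $\int_{\ell_0(z_m)}(i(z-z_m))^{-\omega_m}H_m^{(\mu_m+1)}(z)\,dz$, which is \eqref{lzj_der_cond}.

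The step I expect to be most delicate is this last one: one must keep careful track of the branch of $(i(x-z_m))^{ic_m/\epsilon}$ and of the constants produced by the repeated integration by parts, and verify that every boundary term either vanishes at $-i\infty\in G_0$ or is analytic at $z_m$, so that the surviving obstruction is exactly a nonzero multiple of the integral in \eqref{lzj_der_cond} and not a regularized quantity carrying spurious finite parts. The other ingredients --- the reduction to neighborhoods of the $z_m$, the termwise integration of the Taylor series, and the extraction of the logarithmic coefficient in the resonant case --- should be routine.
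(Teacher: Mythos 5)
Your proposal is correct and follows essentially the same route as the paper's proof: the same local factorization $e^{ih(x;\lambda)/\epsilon}=g_m(x)\,(i(x-z_m))^{ic_m/\epsilon}$ with $g_m$ analytic and nonvanishing, the same use of $H_m$ with contour splitting at $z_m$ and term-by-term integration of its Taylor series, the same $(\mu_m+1)$-fold integration by parts (with boundary terms analytic at $z_m$ and vanishing at $-i\infty$ for $\lambda>0$) when $\R\{ic_m/\epsilon\}\geq 0$ and $\omega_m\neq 0$, and the residue/logarithmic obstruction in the case $\omega_m=0$. The only difference is cosmetic: you spell out the reduction from analyticity across the branch rays to analyticity at the points $z_m$, which the paper asserts without elaboration.
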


Proposition~\ref{analyticity1_M} is proved in the Appendix.
Note that the singularity at $z=z_m$ arising from the exponential $e^{-ih(z;\lambda)/\epsilon}$ on the right-hand side of \eqref{def_Hj} is explicitly cancelled from $H_m(z;\lambda)$ by the factor $(i(z-z_m))^{-ic_m/\epsilon-1}$ on the left-hand side.
The essence of the proof is to deform the contour of integration in \eqref{MJost_eqn} to pass through the point $z_m$ when the integrand is integrable there, i.e.,  $\mathrm{Re}\{ic_m/\epsilon\} < 0$. Thus, we deal separately with integrals on $\ell_0(z_m)$ and on the line from $z_m$ to $x$. If the integrand is not integrable at $z_m$, we first integrate by parts repeatedly to achieve integrability, or  invoke the Residue Theorem if the singularity is a pole.

It is useful to rewrite the conditions for analyticity of $W_+(x;\lambda)$ given in Proposition~\ref{analyticity1_M} by eliminating the derivatives of $H_m$ as follows.  

\begin{corollary}\label{Um_neg_system}
Let $\lambda>0$.  The function $W_{+}(x;\lambda)$ defined by \eqref{MJost_eqn} is analytic in the upper-half $x$-plane if and only if $\{v_p(\lambda)\}_{p=1}^P$ satisfy 
\begin{equation}\label{integral_Um_neg}
\int_{C_m^{>}} e^{-ih(z;\lambda)/\epsilon } \left( \sum_{p=1}^{P} \frac{v_p(\lambda)}{z-z_p} + \lambda \right)\,dz = 0,\quad m=1,\dots,P,
\end{equation}
where $C_m^{>} := U_m^{>}$ unless $ic_m/\epsilon$ is a strictly negative integer, in which case $C_m^{>} := \ell_0(z_m)$. (The contours $C_m^>$ extend to $-i\infty$ in $\domain_0$ making the integrals absolutely convergent.)
\end{corollary}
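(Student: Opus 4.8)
The plan is to deduce Corollary~\ref{Um_neg_system} from Proposition~\ref{analyticity1_M} by a telescoping argument over $m=1,\dots,P$ resting on a single contour-deformation identity on the Riemann surface of $f$. Fix $\lambda>0$ and write
\[
g(z;\lambda):=e^{-ih(z;\lambda)/\epsilon}\Bigl(\lambda+\sum_{p=1}^{P}\frac{v_p(\lambda)}{z-z_p}\Bigr)
\]
for the integrand appearing in \eqref{MJost_eqn}; set $I_m:=\int_{U_m^>}g(z;\lambda)\,dz$ for $m=1,\dots,P$, and $I_0:=0$. Each $I_m$ converges absolutely because $U_m^>$ avoids the poles $z_p$ and tends to $-i\infty$ in $\domain_0$, where $g$ decays exponentially (Lemma~\ref{prop_f_to_infty}, $\lambda>0$). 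Abbreviate $\alpha_m:=ic_m/\epsilon=\mu_m+\omega_m$ with $\mu_m\in\mathbb{Z}$ and $0\le\R\{\omega_m\}<1$, and recall from \eqref{def_Hj} that $g(z;\lambda)=-(i(z-z_m))^{-\alpha_m-1}H_m(z;\lambda)$ with $H_m$ analytic at $z_m$.

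The central step I would carry out is the identity
\[
I_m-I_{m-1}=\kappa_{m-1}\Bigl[\bigl(1-e^{-2\pi i\alpha_m}\bigr)\int_{\ell_0(z_m)}g(z;\lambda)\,dz+\int_{S_\delta(z_m)}g(z;\lambda)\,dz\Bigr],
\]
where $S_\delta(z_m)$ is a small positively oriented circle about $z_m$ and the monodromy factor $\kappa_{m-1}:=\prod_{p=1}^{m-1}e^{-2\pi i\alpha_p}$ (which never vanishes) is that picked up by $e^{-ih/\epsilon}$ upon encircling $z_1,\dots,z_{m-1}$. This follows by deforming $U_m^>$ into the concatenation of $U_{m-1}^>$ with a detour that runs out to $z_m$ along a representative of $\ell_0(z_m)$, encircles $z_m$ once positively, and returns along $\ell_0(z_m)$ reversed (cf.\ Figure~\ref{fig:Upos_fig}); the factor $1-e^{-2\pi i\alpha_m}$ measures the failure of the two portions running between $-i\infty$ and $z_m$ to cancel, and no monodromy beyond $\kappa_{m-1}$ intervenes because $\ell_0(z_m)$ may be chosen to pass below $z_1,\dots,z_{m-1}$ without winding around them while the rational factor in $g$ is single-valued. (When the integrals on the right diverge at $z_m$ the identity is understood after the regularization below; the left side is always finite.)

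It then remains to evaluate the bracket. (i) If $\R\{\alpha_m\}<0$ and $\alpha_m\notin\mathbb{Z}$, then $g$ is locally integrable at $z_m$, so $\int_{S_\delta(z_m)}g\to0$ as $\delta\downarrow0$ while $1-e^{-2\pi i\alpha_m}\neq0$; hence $I_m-I_{m-1}$ is a nonzero multiple of $\int_{\ell_0(z_m)}g$, which by \eqref{def_Hj} is a nonzero multiple of the left side of \eqref{lzj_cond}. (ii) If $\alpha_m\in\mathbb{Z}_{<0}$, then $g$ is analytic at $z_m$, so both bracket terms vanish and $I_m\equiv I_{m-1}$ carries no information --- this is precisely why $C_m^>$ is then taken to be $\ell_0(z_m)$ and \eqref{lzj_cond} is retained as the condition. (iii) If $\R\{\alpha_m\}\ge0$ and $\omega_m\neq0$, then $g\sim(z-z_m)^{-\alpha_m-1}$ is not integrable at $z_m$; using $g=-(i(z-z_m))^{-\alpha_m-1}H_m$ I would integrate by parts $\mu_m+1$ times along each of the two radial portions and the circle, after which the singular endpoint contributions at distance $\delta$ from $z_m$ cancel among the three pieces, the remaining integrands $(i(z-z_m))^{-\omega_m}H_m^{(\mu_m+1)}$ are integrable, and passing to $\delta\downarrow0$ (with $1-e^{-2\pi i\omega_m}=1-e^{-2\pi i\alpha_m}\neq0$) shows $I_m-I_{m-1}$ is a nonzero multiple of the left side of \eqref{lzj_der_cond}. (iv) If $\omega_m=0$, i.e.\ $\alpha_m=\mu_m\in\mathbb{Z}_{\ge0}$, then $e^{-ih/\epsilon}$ is meromorphic at $z_m$, the two radial portions cancel exactly, and $\int_{S_\delta(z_m)}g=2\pi i\mathop{\mathrm{Res}}_{z=z_m}g$, a nonzero multiple of the left side of \eqref{lzj_Log_cond}. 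Thus in cases (i), (iii), (iv) the relevant condition of Proposition~\ref{analyticity1_M} is equivalent to $I_m=I_{m-1}$, while in case (ii) it stands alone and $I_m\equiv I_{m-1}$ unconditionally. Since $I_0=0$, the conjunction of \eqref{lzj_cond}--\eqref{lzj_Log_cond} over all $m$ is therefore equivalent to ``$I_m=0$ for every $m$ with $\alpha_m\notin\mathbb{Z}_{<0}$'' together with ``$\int_{\ell_0(z_m)}g\,dz=0$ for every $m$ with $\alpha_m\in\mathbb{Z}_{<0}$,'' i.e.\ to $\int_{C_m^>}g\,dz=0$ for all $m$; combined with Proposition~\ref{analyticity1_M} this is the corollary.

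I expect the principal obstacle to be the bookkeeping in case (iii): verifying that the divergent endpoint terms created by the repeated integration by parts on the incoming portion, on the outgoing portion, and on the circle about $z_m$ do in fact cancel, and controlling that cancellation uniformly as $\delta\downarrow0$. A secondary point requiring care is making the deformation identity for $I_m-I_{m-1}$ rigorous on the Riemann surface of $f$, in particular that $\ell_0(z_m)$ can be routed to avoid encircling $z_1,\dots,z_{m-1}$ so that only the factor $\kappa_{m-1}$ appears.
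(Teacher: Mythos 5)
Your argument is correct and is essentially the paper's own proof: your keyhole decomposition with the nonvanishing factor $1-e^{-2\pi i \alpha_m}$, the integration by parts with no boundary terms due to decay at $-i\infty$, and the residue evaluation in the single-valued case reproduce the paper's passage from the conditions \eqref{lzj_cond}--\eqref{lzj_Log_cond} to loop integrals $L_m$ around each $z_m$, while your telescoping identity for $I_m-I_{m-1}$ with the monodromy factors $\kappa_{m-1}$ is exactly the paper's final step of ``multiplication of the system by a triangular nonsingular matrix'' made explicit. The bookkeeping you flag in case (iii) is avoided in the paper by performing the integration by parts directly on the closed contour $L_m$ (whose only endpoints are at $-i\infty$), but your route is workable.
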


\begin{proof}
Let $L_m$ be a contour on the Riemann surface of $f$ beginning (with value $f(\infty)=0$) and ending in at $-i\infty$ in $\domain_0$, and encircling only the singularity $z_m$ once in the positive sense.
 Now, in the exceptional case when $C_m^{>} = \ell_0(z_m)$, \eqref{integral_Um_neg} is just the condition \eqref{lzj_cond} rewritten using  \eqref{def_Hj}.  In \eqref{lzj_cond} with $\omega_m\neq 0$ as well as in \eqref{lzj_der_cond}, the integrand has an integrable algebraic branch point at $z=z_m$ so up to a nonzero factor of the form $(1-\alpha)/2$ the integrals appearing in \eqref{lzj_cond} and \eqref{lzj_der_cond} can both be rewritten equivalently with the contour $\ell_0(z_m)$ replaced by $L_m$.  In the case of \eqref{lzj_der_cond} we then integrate by parts $\mu_m+1$ times on $L_m$, which produces no boundary terms for $\lambda>0$ due to exponential decay of the integrand at $-i\infty$.  Eliminating $H_m$ by means of \eqref{def_Hj} then gives \eqref{integral_Um_neg} with $C_m^>$ replaced by $L_m$ in these cases.  The final condition of Proposition~\ref{analyticity1_M}, \eqref{lzj_Log_cond}, can of course be written  as an integral about a small loop centered at $z=z_m$, and applying Cauchy's Theorem under the condition $\lambda>0$ allows this loop to be deformed into the contour $L_m$ (in the case of a locally single-valued integrand), so \eqref{lzj_Log_cond} also takes the form \eqref{integral_Um_neg} with $C_m^>$ replaced by $L_m$ after elimination of $H_m(z;\lambda)$. Finally, the replacement of $L_m$ with $U_m^>$ in each case to arrive at \eqref{integral_Um_neg} amounts to multiplication of the system \eqref{integral_Um_neg} by a triangular nonsingular matrix.
\end{proof}

Note that the conditions for analyticity of $W_+(x;\lambda)$ as expressed in Corollary~\ref{Um_neg_system} take the form of a square linear system of equations on the unknowns $\{v_p(\lambda)\}_{p=1}^{P}$ assembled in a vector $\mathbf{v}(\lambda) := (v_1(\lambda),\dots,v_P(\lambda))^\mathsf{T}$:
\begin{equation}\label{lin_sys_Abar}
\mathbf{A}^{>}(\lambda) \mathbf{v}(\lambda) = \mathbf{b}^{>}(\lambda), \quad\lambda>0,
\end{equation}
where  $\mathbf{A}^{>}(\lambda)\in\mathbb{C}^{P\times P}$ and $\mathbf{b}^{>}(\lambda)\in\mathbb{C}^{P}$ have components
\begin{equation}\label{lin_sys_coeff}
A_{mp}^{>}(\lambda) := \int_{C_m^>} \frac{e^{-ih(z;\lambda)/\epsilon }}{z-z_p} \, dz
\; \;\text{and} \;\;
b_m^{>}(\lambda) := -  \lambda \int_{C_m^>} e^{-ih(z;\lambda)/\epsilon } \, dz.
\end{equation}

The reflection coefficient $\beta(\lambda)$ is now easily calculated.
Combining the scattering relation \eqref{MNbarrelation1} with the boundary conditions \eqref{MNbds} gives
\begin{equation}\label{beta_formula}
\beta(\lambda) = \lim_{x \to \infty} e^{-i \lambda x /\epsilon} \left( W_{+}(x;\lambda)-1 \right) \quad \text{for} \quad \lambda > 0.
\end{equation}
For a rational potential $u_0$ of the form \eqref{u0_def_frac}, we may explicitly write
\begin{equation}\label{base_reflect_coeff}
\beta(\lambda) = \frac{i}{\epsilon} e^{-2 \pi/\epsilon  \sum_{p=1}^{P} c_p} \int_{-\infty}^{\infty}  e^{-i h(z;\lambda)/\epsilon }  \left( u_0(z) - \sum_{p=1}^{P} \frac{v_p(\lambda)}{z-z_p}\right) \; dz,
\end{equation}
using  equation \eqref{MJost_eqn} for the Jost solution $W_{+}(x;\lambda)$ and Lemma~\ref{prop_f_to_infty}.
The coefficients $\{v_p(\lambda)\}_{p=1}^{P}$ are determined by the linear system \eqref{lin_sys_Abar}.

\subsection{Eigenfunctions and eigenvalues}\label{sec_eignfunc}

Eigenfunctions $w^+=\Phi(x;\lambda)$ corresponding to eigenvalues $\lambda <0$ are nontrivial solutions in $\mathbb{H}^+(\mathbb{R})$ of \eqref{w+bounded}, or equivalently \eqref{w+bounded-rational} for $u_0$ of the form \eqref{u0_def_frac}, with $w_0=0$.  Integrating \eqref{w+bounded-rational} with $w_0=0$ and imposing $\Phi(x;\lambda)\to 0$ as $x\to -\infty$ shows that $\Phi(x;\lambda)$ necessarily has the form
\begin{equation}\label{Phij_int}
\Phi(x;\lambda)
= - \frac{i}{\epsilon} e^{ih(x;\lambda)/\epsilon } \int_{-\infty}^{x} e^{-ih(z;\lambda)/\epsilon }  \sum_{p=1}^{P} \frac{\phi_p(\lambda)}{z-z_p} \; dz,
\end{equation}
where
$\phi_p(\lambda) := c_p \Phi(z_p;\lambda)$  for $p = 1,\ldots, P$.
Since $\lambda<0$, the convergent improper integral on the right-hand side of \eqref{Phij_int} becomes absolutely convergent if the contour is rotated at infinity into the upper-half $x$-plane.

While \eqref{Phij_int} satisfies \eqref{w+bounded-rational} and is analytic for $x$ in a neighborhood of $\mathbb{R}$, to obtain the required analyticity for all $x$ with $\I\{x\}>0$ requires conditions on both $\lambda<0$ and $\{\phi_p(\lambda)\}_{p=0}^P$.  As we will see these conditions will also guarantee that $\Phi(\cdot;\lambda)\in\mathbb{H}^+(\mathbb{R})$, i.e., that it vanishes as $|x|\to\infty$ in all directions of the upper half-plane.

\begin{definition}\label{defn_Uminus}
$U_m^<$ denotes any member of the equivalence class of contours in $\domain$ originating at $i\infty$ in $\domain_0$ and ending at $i\infty$ in $\domain_m$; see Figures~\ref{fig:fdomain} and \ref{fig:Upos_fig}.
\end{definition}

\begin{proposition}\label{lin_syst_eigfunc}
Let $\lambda<0$.  The function $\Phi(x;\lambda)$ is analytic in the upper-half $x$-plane if and only if  $\{\phi_p(\lambda)\}_{p=1}^P$ satisfy
\begin{equation}\label{Uzj_cond}
\int_{C_m^{<}} e^{-ih(z;\lambda)/\epsilon } \sum_{p=1}^{P} \frac{\phi_p(\lambda)}{z-z_p} \, dz = 0,\quad
 m = 1,  \dots, P, 
\end{equation}
where $C_m^{<} := U_m^{<}$ unless $ic_m/\epsilon$ is a strictly negative integer, in which case $C_m^{<} := \ell_0(z_m)$. (The contours $C_m^<$ extend to $i\infty$ in $\domain_0$ and $\domain_m$ making the integrals absolutely convergent.)
\end{proposition}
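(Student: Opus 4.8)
The plan is to follow the proof of Proposition~\ref{analyticity1_M} together with the reduction carried out in Corollary~\ref{Um_neg_system}, essentially verbatim, adapting for the two structural differences between an eigenfunction and the Jost solution: here $w_0=0$, so the integrand in \eqref{Phij_int} carries only $\sum_{p}\phi_p(\lambda)/(z-z_p)$ in place of $\lambda+\sum_p v_p(\lambda)/(z-z_p)$; and here $\lambda<0$ rather than $\lambda>0$, so that by Lemma~\ref{prop_f_to_infty} the function $h(z;\lambda)$ is dominated by $\lambda z$ at $+i\infty$ and the factor $e^{-ih(z;\lambda)/\epsilon}$ decays as $z\to+i\infty$ in each subdomain, forcing all contours at infinity to be rotated into the upper rather than the lower half-$z$-plane. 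This is precisely why $\ell_0(z_m)$ in the present statement originates at $+i\infty$ in $\domain_0$ and why the contours $U_m^<$ (running from $\domain_0$ up and around below $z_1,\dots,z_m$ into $\domain_m$) replace the loops $U_m^>$.

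First I would establish the eigenfunction analogue of Proposition~\ref{analyticity1_M}. Since $\Phi(x;\lambda)$ as defined by \eqref{Phij_int} is manifestly analytic for $x$ near $\mathbb{R}$, the only possible obstructions to its analyticity in the open upper half-plane are at the poles $z_1,\dots,z_P$ of $u_0$. Writing $\Phi(x;\lambda)=e^{ih(x;\lambda)/\epsilon}F(x;\lambda)$ with $F'(x;\lambda)=-\frac{i}{\epsilon}e^{-ih(x;\lambda)/\epsilon}\sum_{p}\phi_p(\lambda)/(x-z_p)$, and using \eqref{pole_form} to factor $e^{ih(x;\lambda)/\epsilon}=(i(x-z_m))^{ic_m/\epsilon}E_m(x;\lambda)$ with $E_m$ analytic and nonvanishing near $z_m$, I would introduce the analogue $\widehat{H}_m(z;\lambda)$ of the function in \eqref{def_Hj}, obtained by the same formula with $\lambda+\sum_p v_p/(z-z_p)$ replaced by $\sum_p\phi_p/(z-z_p)$, which is again analytic on $\domain$ and near $z_m$. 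Then the local analysis at $z_m$ is identical to that in the Appendix: if $\R\{ic_m/\epsilon\}<0$ the local integrand is integrable at $z_m$ and analyticity of $\Phi$ is equivalent to the vanishing of $F(z_m)$, i.e. of the integral of $(i(z-z_m))^{-ic_m/\epsilon-1}\widehat{H}_m(z;\lambda)$ along $\ell_0(z_m)$; if $\R\{ic_m/\epsilon\}\ge 0$ and $\omega_m\ne 0$ one first integrates by parts $\mu_m+1$ times to recover integrability and reaches the same conclusion for $(i(z-z_m))^{-\omega_m}\,d^{\mu_m+1}\widehat{H}_m/dz^{\mu_m+1}$; and if $\R\{ic_m/\epsilon\}\ge 0$ with $\omega_m=0$, which (since $c_m\ne 0$) means $ic_m/\epsilon$ is a positive integer, the local integrand is meromorphic with a pole of order $\mu_m+1$ at $z_m$ and analyticity is equivalent to vanishing of the residue, i.e. absence of a logarithmic term. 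These are the conditions \eqref{lzj_cond}--\eqref{lzj_Log_cond} with $H_m$ replaced by $\widehat{H}_m$ and $\ell_0(z_m)$ now originating at $+i\infty$ in $\domain_0$.

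Next I would reproduce the argument of Corollary~\ref{Um_neg_system} to put all three conditions into the uniform shape \eqref{Uzj_cond}. In the exceptional case $ic_m/\epsilon\in\mathbb{Z}_{<0}$ there is no branch point, $F-F(z_m)$ vanishes at $z_m$ to at least the order of the pole of $e^{ih(x;\lambda)/\epsilon}$ there, so the only condition is $F(z_m)=0$, which is \eqref{Uzj_cond} with $C_m^<=\ell_0(z_m)$. Otherwise one first eliminates $\widehat{H}_m$ by means of its defining relation and rewrites the condition on $\ell_0(z_m)$, up to a nonzero multiplicative constant coming from the branch-point monodromy (a residue computation when $ic_m/\epsilon$ is an integer), as the vanishing of the corresponding integral over a loop $L_m$ encircling only $z_m$ and based at $+i\infty$ in $\domain_0$; undoing the $\mu_m+1$ integrations by parts on $L_m$ produces no boundary terms because, with $\lambda<0$, $e^{-ih(z;\lambda)/\epsilon}$ and all algebraic factors produced along the way decay as $z\to+i\infty$, and for the same reason all the integrals are absolutely convergent. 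Finally, passing from the loops $\{L_m\}_{m=1}^P$ to the contours $\{U_m^<\}_{m=1}^P$ amounts, via a deformation on the Riemann surface of $f$, to multiplying the system of $P$ linear conditions by a unit lower-triangular, hence nonsingular, matrix, so the two systems are equivalent; this proves the stated equivalence.

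I expect the only genuinely new point — everything else being a transcription of the Jost case — to be verifying that the rotation of the contours to $+i\infty$ and the resulting geometry of $U_m^<$ (an open arc from $\domain_0$ into $\domain_m$, rather than a closed loop) remain compatible with absolute convergence and with the vanishing of boundary terms in the repeated integrations by parts, which is exactly where the sign $\lambda<0$ and Lemma~\ref{prop_f_to_infty} are used. As a by-product, once \eqref{Uzj_cond} holds, rotating the contour in \eqref{Phij_int} toward $+i\infty$ and applying Lemma~\ref{prop_f_to_infty} to $e^{ih(x;\lambda)/\epsilon}$ shows that $\Phi(\cdot;\lambda)$ decays in every direction of the closed upper half-plane, so that in fact $\Phi(\cdot;\lambda)\in\mathbb{H}^+(\mathbb{R})$.
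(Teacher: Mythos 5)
Your proposal is correct and is essentially the proof the paper has in mind: the paper omits it precisely because it is the transcription of Proposition~\ref{analyticity1_M} and Corollary~\ref{Um_neg_system} with $w_0=0$ (no $\lambda$ term, homogeneous system) and $\lambda<0$ (decay at $+i\infty$, so $\ell_0(z_m)$ and the loop/Hankel contours are rotated upward and $U_m^<$ replaces $U_m^>$), which is exactly what you carry out, including the correct handling of the exceptional case $ic_m/\epsilon\in\mathbb{Z}_{<0}$ and the triangular change of contours from the single-singularity loops to $\{U_m^<\}$.
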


The proof of Proposition~\ref{lin_syst_eigfunc} is similar to that of Proposition~\ref{analyticity1_M} and Corollary~\ref{Um_neg_system} and is omitted for brevity.
Introducing the vector $\boldsymbol{\phi}(\lambda)=(\phi_1(\lambda),\dots,\phi_P(\lambda))^\mathsf{T}$ of unknowns, the conditions of Proposition~\ref{lin_syst_eigfunc} take the form of a square homogeneous linear system $\mathbf{A}^<(\lambda)\boldsymbol{\phi}(\lambda)=\mathbf{0}$ where the matrix $\mathbf{A}^<(\lambda)\in\mathbb{C}^{P\times P}$ has components
\begin{equation}
\label{lin_sys_coeff_minus}
A_{mp}^<(\lambda):= \int_{C_m^<}\frac{e^{-ih(z;\lambda)/\epsilon}}{z-z_p}\,dz, \quad \lambda<0.
\end{equation}
\begin{corollary}\label{cor:base_evans}
The function $\Phi(x;\lambda)$ given by \eqref{Phij_int} is both nontrivial and analytic in the closed upper-half $x$-plane if and only if 
$D(\lambda):=\det (\mathbf{A}^{<}(\lambda) ) =0$ and $\boldsymbol{\phi}(\lambda)$ is a nontrivial nullvector of $\mathbf{A}^<(\lambda)$, unique up to a constant multiple.
\end{corollary}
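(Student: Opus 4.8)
The plan is to combine Proposition~\ref{lin_syst_eigfunc} with elementary linear algebra, supplemented by an argument that analyticity in the open upper half-plane, together with the constraints \eqref{Uzj_cond}, forces the decay needed for $\Phi(\cdot;\lambda)\in\mathbb{H}^+(\mathbb{R})$ and also forces $\Phi$ to be genuinely nontrivial. First I would observe that, by Proposition~\ref{lin_syst_eigfunc}, analyticity of $\Phi(x;\lambda)$ (as given by \eqref{Phij_int}) throughout $\I\{x\}>0$ is equivalent to $\mathbf{A}^<(\lambda)\boldsymbol\phi(\lambda)=\mathbf 0$, where the entries of $\boldsymbol\phi(\lambda)$ are tied to $\Phi$ itself via $\phi_p(\lambda)=c_p\Phi(z_p;\lambda)$. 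Thus there are really two logically distinct "$\boldsymbol\phi$"'s to reconcile: the vector built from the (a priori unknown) solution $\Phi$, and an abstract nullvector of $\mathbf A^<(\lambda)$. The forward direction is then immediate: if $\Phi$ is nontrivial and analytic in the closed upper half-plane, its associated vector $\boldsymbol\phi(\lambda)$ solves the homogeneous system and must be nonzero --- for if $\boldsymbol\phi(\lambda)=\mathbf 0$, then the integrand in \eqref{Phij_int} vanishes identically and $\Phi\equiv 0$ --- so $\mathbf A^<(\lambda)$ has a nontrivial kernel and hence $D(\lambda)=0$.

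For the converse, suppose $D(\lambda)=0$ and let $\boldsymbol\phi=(\phi_1,\dots,\phi_P)^\mathsf T$ be any nonzero nullvector of $\mathbf A^<(\lambda)$. I would \emph{define} a candidate eigenfunction $\Phi(x;\lambda)$ by the right-hand side of \eqref{Phij_int} using these $\phi_p$'s as the coefficients. By construction this $\Phi$ satisfies \eqref{w+bounded-rational} with $w_0=0$ and is analytic near $\mathbb R$; since $\boldsymbol\phi$ satisfies \eqref{Uzj_cond}, Proposition~\ref{lin_syst_eigfunc} gives analyticity for all $\I\{x\}>0$. Nontriviality follows because $\Phi\not\equiv 0$: otherwise the sum $\sum_p \phi_p/(z-z_p)$ would vanish identically (differentiate $e^{-ih/\epsilon}\Phi$, or inspect the integrand as $x$ ranges over a neighbourhood of $\mathbb R$), which by partial fractions forces every $\phi_p=0$, contradicting $\boldsymbol\phi\neq\mathbf 0$. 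The one subtlety here is that the $\phi_p$ we plugged in need not a priori equal $c_p\Phi(z_p;\lambda)$ for the resulting $\Phi$; I would address this by noting that the differential equation \eqref{w+bounded-rational} evaluated consistently at the construction stage already encodes $\phi_p=c_p\Phi(z_p;\lambda)$, or alternatively remark that any solution of \eqref{w+bounded-rational} analytic in $\I\{x\}>0$ and vanishing at $-\infty$ must be of the form \eqref{Phij_int} with $\phi_p=c_p\Phi(z_p;\lambda)$, so the two descriptions coincide.

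It remains to upgrade "analytic in the open upper half-plane and bounded near $\mathbb R$" to "analytic in the \emph{closed} upper half-plane" in the sense that $\Phi(\cdot;\lambda)\in\mathbb H^+(\mathbb R)$, i.e.\ $\Phi(x;\lambda)\to 0$ as $|x|\to\infty$ with $\I\{x\}\ge 0$. This is where I expect the main work to lie, and it is exactly the decay claim flagged in the text just before Proposition~\ref{lin_syst_eigfunc}. The mechanism is that the prefactor $e^{ih(x;\lambda)/\epsilon}$ grows like $e^{i\lambda x/\epsilon}$ (with $\lambda<0$, this decays in the lower half-plane and grows in the upper half-plane), so for $\Phi$ in \eqref{Phij_int} to decay one needs the integral $\int_{-\infty}^x e^{-ih(z;\lambda)/\epsilon}\sum_p \phi_p/(z-z_p)\,dz$ to tend to zero fast enough as $x\to\infty$ in $\domain_m$ for every $m$. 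The constraints \eqref{Uzj_cond} are precisely engineered so that this integral, taken along a path escaping to $i\infty$ in $\domain_m$, vanishes in the limit: deforming the contour $U_m^<$ (which runs from $i\infty$ in $\domain_0$ to $i\infty$ in $\domain_m$) against the contour of integration in \eqref{Phij_int}, the identity \eqref{Uzj_cond} kills the would-be exponentially growing contribution. I would therefore carry out a contour-deformation estimate, case by case on the sectors $\domain_m$, using Lemma~\ref{prop_f_to_infty} to control $h(x;\lambda)$ at infinity, and conclude $x\Phi(x;\lambda)\to$ const uniformly for $\I\{x\}\ge 0$; combined with the normalization \eqref{Phij_bds}/\eqref{norm_eigfunc} this yields membership in $\mathbb H^+(\mathbb R)$. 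Finally, uniqueness of $\boldsymbol\phi$ up to scalar multiple is not automatic from $D(\lambda)=0$ alone, but it holds in our setting because the eigenvalue $\lambda_j$ of the self-adjoint operator $\mathcal L$ (see \eqref{w+bounded} and the surrounding discussion) is simple --- a standard fact for this Lax operator --- so $\ker\mathbf A^<(\lambda_j)$ is one-dimensional; I would cite this simplicity (e.g.\ from \cite{ReedSimon4} or the analyticity structure of $W$ near $\lambda_j$ in \eqref{LaurentW}, which exhibits a single $\Phi_j$) to finish.
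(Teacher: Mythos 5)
Your proposal is correct and follows essentially the same route as the paper's proof, which is a two-line appeal to Proposition~\ref{lin_syst_eigfunc} (analyticity $\Leftrightarrow$ the homogeneous system \eqref{Uzj_cond}, nontriviality $\Leftrightarrow$ $\boldsymbol{\phi}(\lambda)\neq\mathbf{0}$, hence $D(\lambda)=0$) together with the externally cited fact that $\dim\ker(\mathbf{A}^<(\lambda))=1$ whenever $D(\lambda)=0$. Two remarks. First, the bulk of your effort goes into the decay $\Phi(\cdot;\lambda)\in\mathbb{H}^+(\mathbb{R})$ as $|x|\to\infty$; that is not part of this statement (``analytic in the closed upper-half $x$-plane'' asks only for analyticity on $\I\{x\}\ge 0$, which Proposition~\ref{lin_syst_eigfunc} already delivers), and the paper defers exactly that contour-deformation argument to the proof of Corollary~\ref{cor:phi_eigenvector}. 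Second, your justification of uniqueness of the nullvector via simplicity of the eigenvalue of $\mathcal{L}$ risks circularity: to transfer simplicity of $\lambda_j$ to $\dim\ker(\mathbf{A}^<(\lambda_j))=1$ you must first know that \emph{every} nullvector yields a genuine $\mathbb{H}^+$ eigenfunction and that the map $\boldsymbol{\phi}\mapsto\Phi$ is injective, i.e.\ you need the Corollary~\ref{cor:phi_eigenvector} machinery before you can invoke spectral simplicity. The paper avoids this by citing the companion reference \cite{upcoming-paper} for the purely linear-algebraic rank statement; if you want a self-contained argument you should either prove that rank statement directly or carefully order the implications so that the $\mathbb{H}^+$ membership comes first.
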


\begin{proof}
This follows from Proposition~\ref{lin_syst_eigfunc}. It only remains to check that $\mathrm{dim}\left(\mathrm{ker} \left( \mathbf{A}^{<}(\lambda)\right) \right) = 1$ whenever $\det(\mathbf{A}^<(\lambda))=0$, a fact shown in \cite{upcoming-paper}.
\end{proof}

\begin{corollary}\label{cor:phi_eigenvector}
Suppose $D(\lambda)=0$, and that the components $\{ \phi_p(\lambda) \}_{p=1}^{P}$ of the  nullvector $\boldsymbol{\phi}(\lambda)$   are normalized to satisfy
\begin{equation}\label{Phi_norm}
 \sum_{p=1}^{P} \phi_p(\lambda) = \lambda<0,
\end{equation}
then the function $\Phi(x;\lambda)$ satisfies the normalization condition \eqref{norm_eigfunc}, $\Phi(\cdot;\lambda) \in \mathbb{H}^{+}$, and therefore $\Phi(x;\lambda)$ is an eigenfunction of equation~\eqref{w+bounded}. 
\end{corollary}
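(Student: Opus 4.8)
The plan is to obtain the nontriviality and analyticity of $\Phi(\cdot;\lambda)$ in the closed upper-half plane for free from Corollary~\ref{cor:base_evans}, and then to derive the two outstanding facts --- that $\Phi(\cdot;\lambda)$ vanishes at infinity (hence lies in $\mathbb{H}^{+}(\mathbb{R})$) and obeys the normalization \eqref{norm_eigfunc} --- directly from the representation \eqref{Phij_int} using the analyticity conditions \eqref{Uzj_cond} of Proposition~\ref{lin_syst_eigfunc}. Since $D(\lambda)=0$, Corollary~\ref{cor:base_evans} gives that $\Phi(x;\lambda)$ is nontrivial and analytic for $\I\{x\}\ge 0$ and that \eqref{Uzj_cond} holds for the normalized nullvector $\boldsymbol{\phi}(\lambda)$. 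Write $g(z):=\sum_{p=1}^{P}\phi_p(\lambda)/(z-z_p)$, so that $h'(z;\lambda)=\lambda+u_0(z)$ and \eqref{Phij_int} reads $\Phi(x;\lambda)=-\tfrac{i}{\epsilon}e^{ih(x;\lambda)/\epsilon}\int_{-\infty}^{x}e^{-ih(z;\lambda)/\epsilon}g(z)\,dz$. The first step is to observe that the $m=P$ case of \eqref{Uzj_cond}, after deforming the contour $C_P^{<}$ onto $\mathbb{R}$ (with the ends of the contour tilted into $\I\{z\}>0$ to secure absolute convergence, as permitted for $\lambda<0$ by Lemma~\ref{prop_f_to_infty}, and with the branch points $z_1,\dots,z_P$ and the exceptional integer cases treated exactly as in the proofs of Proposition~\ref{analyticity1_M} and Corollary~\ref{Um_neg_system}), is equivalent to
\begin{equation*}
\int_{-\infty}^{\infty}e^{-ih(z;\lambda)/\epsilon}g(z)\,dz=0 .
\end{equation*}
Consequently $\int_{-\infty}^{x}=-\int_{x}^{\infty}$ for the integrand in \eqref{Phij_int}, giving the ``backward'' representation $\Phi(x;\lambda)=\tfrac{i}{\epsilon}e^{ih(x;\lambda)/\epsilon}\int_{x}^{\infty}e^{-ih(z;\lambda)/\epsilon}g(z)\,dz$; deforming instead to the contours $C_m^{<}$ yields analogous representations with the integral running from $x$ out to $i\infty$ inside each subdomain $\domain_m$.

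Next I would read off the behavior of $\Phi(x;\lambda)$ as $|x|\to\infty$ in the closed upper-half plane from these representations. In each of them the contour from $x$ to infinity may be chosen so that $e^{-ih(z;\lambda)/\epsilon}$ decays along it, since by Lemma~\ref{prop_f_to_infty} $h(z;\lambda)$ is asymptotic to $\lambda z$ with $\lambda<0$; moreover for $|x|$ large the contour stays where $h'(z;\lambda)=\lambda+u_0(z)$ is bounded away from $0$, so that repeated integration by parts via $e^{-ih/\epsilon}=i\epsilon\,(h')^{-1}\,\tfrac{d}{dz}e^{-ih/\epsilon}$ is available. The condition $\sum_p(c_p+c_p^{*})=0$ forces $u_0(z)=O(|z|^{-2})$, and the normalization $\sum_p\phi_p(\lambda)=\lambda$ of \eqref{Phi_norm} forces $g(z)=\lambda/z+O(|z|^{-2})$; hence the boundary term at $z=x$ dominates and one finds
\begin{equation*}
\Phi(x;\lambda)=\frac{g(x)}{h'(x;\lambda)}+O(|x|^{-2})=\frac{1}{x}+O(|x|^{-2}),\qquad |x|\to\infty,\quad \I\{x\}\ge 0,
\end{equation*}
uniformly in direction. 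This is precisely the normalization \eqref{Phij_bds}, and it shows $\Phi(x;\lambda)\to 0$ in every direction of the closed upper-half plane, so that $\Phi(\cdot;\lambda)\in\mathbb{H}^{+}(\mathbb{R})$; in particular $\Phi(\cdot;\lambda)$ is bounded and analytic for $\I\{x\}\ge 0$, and $\Phi'(\cdot;\lambda)\in\mathbb{H}^{+}(\mathbb{R})$ as well.

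Finally, apply the Cauchy projection $\mathcal{C}^{-}$ (see \eqref{Cpmdef1}) to \eqref{w+bounded-rational} with $w_0=0$. Since $\Phi(\cdot;\lambda),\Phi'(\cdot;\lambda)\in\mathbb{H}^{+}(\mathbb{R})$ one has $\mathcal{C}^{-}[\Phi]=\mathcal{C}^{-}[\Phi']=0$, while $(x-z_p)^{-1}\in\mathbb{H}^{-}(\mathbb{R})$ (as $\I\{z_p\}>0$) gives $\mathcal{C}^{-}[g]=-g$; hence $\mathcal{C}^{-}[u_0\Phi]=-g$. This identity is exactly what is needed for \eqref{w+bounded-rational} to imply the projected equation \eqref{w+bounded} with $w_0=0$, so $\Phi(\cdot;\lambda)$ solves \eqref{w+bounded}, not merely \eqref{w+bounded-rational}; and letting $x\to\infty$, with $u_0\Phi\in L^{1}(\mathbb{R})$ giving $\mathcal{C}^{-}[u_0\Phi](x)=-(2\pi i x)^{-1}\int_{-\infty}^{\infty}u_0\Phi\,dx+O(|x|^{-2})$ and $-g(x)=-x^{-1}\sum_p\phi_p(\lambda)+O(|x|^{-2})$, matching the $O(1/x)$ coefficients yields $\tfrac{1}{2\pi i}\int_{-\infty}^{\infty}u_0\Phi\,dx=\sum_p\phi_p(\lambda)=\lambda$, which is \eqref{norm_eigfunc}. (The same conclusion also follows from \eqref{Phij_bds} via the equivalence recorded in Section~\ref{Backgr}.) Collecting the pieces, $\Phi(x;\lambda)$ is nontrivial, lies in $\mathbb{H}^{+}(\mathbb{R})$, solves \eqref{w+bounded} with $w_0=0$ for the given $\lambda<0$, and satisfies \eqref{norm_eigfunc}; by the definition of eigenfunction given in Section~\ref{Backgr} it is therefore the eigenfunction associated with the eigenvalue $\lambda$.

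The main obstacle is the second step: the \emph{uniform} decay of $\Phi(x;\lambda)$ as $|x|\to\infty$ throughout the closed upper-half plane. The integration-by-parts asymptotics near $z=x$ are benign once $|x|$ is large (there $h'$ is close to $\lambda\neq 0$); the delicate part is justifying the passage from \eqref{Phij_int} to the ``backward'' representations --- deforming $C_m^{<}$ onto $\mathbb{R}$ and onto $i\infty$ while correctly accounting for the algebraic branch points $z_1,\dots,z_m$ of $e^{-ih(z;\lambda)/\epsilon}$ and for the exceptional cases in which $ic_m/\epsilon$ is a strictly negative integer --- and then controlling the remainder integrals uniformly in the direction of approach. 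This is precisely the bookkeeping already carried out in the proofs of Proposition~\ref{analyticity1_M} and Corollary~\ref{Um_neg_system}; the remaining estimates are routine.
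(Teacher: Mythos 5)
Your proposal is correct, and its engine is the same as the paper's: everything hinges on the fact that, when $D(\lambda)=0$, the vanishing conditions \eqref{Uzj_cond} of Proposition~\ref{lin_syst_eigfunc} let you replace the contour in \eqref{Phij_int} by contours running from $x$ out to infinity in each subdomain $\domain_m$, which is exactly how the paper kills the limit $\lim_{x\to\infty\in\domain_m}\int_{-\infty}^x$ (the paper phrases this as the limit equaling $\int_{U_m^<}=0$, plus Jordan's Lemma in $\domain_0$, and like you it quietly glosses the exceptional case $ic_m/\epsilon\in\mathbb{Z}_{<0}$, where one should note that $\int_{U_m^<}=\int_{U_{m-1}^<}$ automatically because the integrand is then single-valued and analytic at $z_m$). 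Where you diverge is in the packaging: you push the ``backward'' representations through an integration by parts to obtain the quantitative statement $\Phi=x^{-1}+O(x^{-2})$ uniformly --- which is more than the corollary needs and is precisely the stronger result the paper defers to the companion reference --- and you obtain \eqref{norm_eigfunc} by matching $O(1/x)$ terms in the Plemelj decomposition $\mathcal{C}^-[u_0\Phi]=-\sum_p\phi_p(\lambda)/(x-z_p)$, whereas the paper gets it in one line by closing the integral $\int u_0\Phi\,dx$ in the upper half-plane and summing residues $c_p\Phi(z_p;\lambda)=\phi_p(\lambda)$; the two computations are equivalent, but the residue route avoids your (asserted, not proved) side claim $\Phi'(\cdot;\lambda)\in\mathbb{H}^+$, which can be bypassed anyway by computing $\mathcal{C}^-[u_0\Phi]$ directly by partial fractions. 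Your explicit verification that a solution of \eqref{w+bounded-rational} lying in $\mathbb{H}^+$ solves \eqref{w+bounded} is a worthwhile extra step that the paper leaves implicit in its Section~3.1 reduction. Net effect: same key lemma and deformation idea, with your version buying the sharper decay rate and a more self-contained logical chain at the cost of heavier (and admittedly sketched) uniform estimates.
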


\begin{proof}
By Corollary~\ref{cor:base_evans}, the function $\Phi(x;\lambda)$ is analytic in the closed upper-half $x$-plane. From \eqref{Phij_int} one then invokes the Residue Theorem to see that \eqref{Phi_norm} implies \eqref{norm_eigfunc}.  
 The fact that $\boldsymbol{\phi}(\lambda)$ can be scaled to satisfy \eqref{Phi_norm}, i.e., that $\ker(\mathbf{A}^<(\lambda))$ is not orthogonal to $(1,1,\dots,1)^\mathsf{T}$, is proven in \cite{upcoming-paper}.
Lastly, to show that $\Phi \in \mathbb{H}^{+}$, we check that $\Phi \to 0$ as $|x| \to \infty$ in the closed upper-half $x$-plane. 
First, applying Jordan's Lemma to \eqref{Phij_int} shows that $\Phi \to 0$ when $x \to \infty$ anywhere in $\domain_0$. To let $x\to\infty$ elsewhere in  $\domain$, we rewrite the limiting contour integral in \eqref{Phij_int} as
$\lim_{x\to \infty \in \domain_{m}} \int_{-\infty}^{x} = \int_{U_m^{<}}$ whenever $ic_m/\epsilon$ is not a negative integer.
From Proposition~\ref{lin_syst_eigfunc}, this limit is identically zero for all $m = 1,\ldots, P$, since each integral on $U_m^{<}$ vanishes when $D(\lambda)=0$.
\end{proof}
\begin{corollary}
\label{cor:normalize}
Let $D(\lambda)=0$ for some $\lambda=\lambda_j<0$.  The eigenfunction $\Phi_j(x):=\Phi(x;\lambda_j)$ given by \eqref{Phij_int} with the conditions from Corollary~\ref{cor:phi_eigenvector} satisfies the asymptotic condition \eqref{Phij_bds}. 
\end{corollary}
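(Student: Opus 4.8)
The plan is to establish the asymptotic normalization \eqref{Phij_bds}, namely $x\Phi_j(x)\to 1$ as $|x|\to\infty$ uniformly for $\I\{x\}\ge 0$, by extracting the leading-order behavior of the integral representation \eqref{Phij_int} as $x\to\infty$ in each of the subdomains $\domain_m$, $m=0,1,\dots,P$. The key point is that Corollary~\ref{cor:phi_eigenvector} already tells us $\Phi_j\in\mathbb{H}^+$ and, crucially, that each of the contour integrals $\int_{C_m^<}e^{-ih(z;\lambda_j)/\epsilon}\sum_p \phi_p(\lambda_j)/(z-z_p)\,dz$ vanishes (Proposition~\ref{lin_syst_eigfunc}). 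So the strategy is: write $\int_{-\infty}^x = \int_{-\infty}^x$ and, for $x\to\infty\in\domain_m$ with $m\ge 1$, split off the vanishing piece $\int_{U_m^<}$ (valid since the eigenvalue condition guarantees $ic_m/\epsilon$ need not be a negative integer for the exceptional-contour bookkeeping, or else handle that case with $\ell_0(z_m)$ as in Corollary~\ref{cor:phi_eigenvector}), leaving a remainder integral whose endpoint controls the decay.

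First I would treat $x\to\infty$ in $\domain_0$ (the left half-plane $\R\{x\}<\R\{z_1\}$). Here $h(z;\lambda_j)=\lambda_j z + f(z)$ with $f$ bounded near $\infty$ in $\domain_0$ and $\lambda_j<0$, so $e^{ih(x;\lambda_j)/\epsilon}=e^{i\lambda_j x/\epsilon}\cdot(\text{bounded})$ decays exponentially as $\I\{x\}\to+\infty$; along the real axis one integrates by parts once in \eqref{Phij_int}, using $\sum_p \phi_p/(z-z_p) = (\sum_p\phi_p)/z + O(z^{-2}) = \lambda_j/z + O(z^{-2})$ by the normalization \eqref{Phi_norm}, to obtain $\Phi(x;\lambda_j) = -\tfrac{i}{\epsilon}e^{ih/\epsilon}\cdot\big[\tfrac{\epsilon}{i\lambda_j}\cdot\tfrac{\lambda_j}{x}e^{-ih/\epsilon} + \text{(lower order)}\big] = \tfrac{1}{x} + o(1/x)$, which gives $x\Phi_j(x)\to 1$. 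The error terms are controlled by the absolute convergence (after contour rotation) noted after \eqref{Phij_int} together with the explicit $O(z^{-2})$ decay of $u_0$ and of $\sum_p\phi_p/(z-z_p) - \lambda_j/z$.

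Next, for $x\to\infty\in\domain_m$ with $1\le m\le P$, I would write $\Phi(x;\lambda_j) = -\tfrac{i}{\epsilon}e^{ih(x;\lambda_j)/\epsilon}\big(\int_{U_m^<} + \int_{\text{tail}}\big)$, where $\int_{U_m^<}$ contributes nothing by Proposition~\ref{lin_syst_eigfunc} and the tail runs from $i\infty\in\domain_m$ down to $x$. Parametrizing the tail so that its endpoint is $x$, the same integration-by-parts argument applies: the leading term of $\sum_p\phi_p(\lambda_j)/(z-z_p)$ at large $z$ is again $\lambda_j/z$, the boundary term at $z=x$ produces $\tfrac1x e^{-ih(x;\lambda_j)/\epsilon}$ up to multiplicative factors coming from $e^{f(x)/\epsilon}$, but by Lemma~\ref{prop_f_to_infty}, $f(x)\to 2\pi i\sum_{p=1}^m c_p$ in $\domain_m$ — a \emph{constant} — and this constant must cancel exactly against the prefactor structure so that $x\Phi_j(x)\to 1$ independently of $m$. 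Establishing that this cancellation is exact (equivalently, that the normalization \eqref{norm_eigfunc}/\eqref{Phi_norm} is precisely the one that makes the limit $1$ in \emph{every} $\domain_m$, not just $\domain_0$) is the step I expect to be the main obstacle; it is essentially a consistency check tying together Lemma~\ref{prop_f_to_infty}, the definition of the $U_m^<$ contours, and the vanishing conditions, and it is what forces the particular constant $2\pi i\sum c_p$ to drop out. The exceptional case $ic_m/\epsilon\in\mathbb{Z}_{<0}$ is handled as in the proof of Corollary~\ref{cor:phi_eigenvector}, replacing $U_m^<$ by $\ell_0(z_m)$ and invoking the corresponding vanishing condition. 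Finally, uniformity in $\arg(x)$ on $\I\{x\}\ge 0$ follows because in each closed sector the error estimates above are uniform, and the finitely many subdomains $\domain_m$ together with their narrow transition regions near the branch cuts cover the closed upper half-plane at infinity.
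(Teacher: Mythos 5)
Your proposal reaches the right conclusion but by a far more elaborate route than the paper, which disposes of this corollary in one line: writing $x\Phi(x;\lambda)$ as a quotient and applying l'H\^opital's rule with $h'(x;\lambda)=\lambda+u_0(x)$ gives
\begin{equation*}
\lim_{x\to\infty}x\Phi(x;\lambda)=\lim_{x\to\infty}\frac{ix^2}{\epsilon+i\lambda x+ixu_0(x)}\sum_{p=1}^{P}\frac{\phi_p(\lambda)}{x-z_p}=\frac{1}{\lambda}\sum_{p=1}^{P}\phi_p(\lambda)=1
\end{equation*}
by \eqref{Phi_norm}, with the stronger uniform statement $\Phi=x^{-1}+O(x^{-2})$ explicitly deferred to \cite{upcoming-paper}. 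Your sector-by-sector integration by parts is a legitimate alternative and is more careful about uniformity over $\I\{x\}\ge 0$ than the paper's computation, but the step you single out as ``the main obstacle'' is not one: the boundary term generated at the endpoint $z=x$ of the tail integral carries $e^{-ih(x;\lambda_j)/\epsilon}$ evaluated at the \emph{same} point and on the \emph{same} branch of $f$ as the prefactor $e^{ih(x;\lambda_j)/\epsilon}$ in \eqref{Phij_int}, so the exponentials cancel identically; the surviving leading coefficient is $\bigl(\sum_p\phi_p(\lambda_j)/(x-z_p)\bigr)/h'(x;\lambda_j)\sim(\lambda_j/x)/\lambda_j=1/x$ in every $\domain_m$, no constant from Lemma~\ref{prop_f_to_infty} remains, and that lemma is needed only to control the contribution at $i\infty$ and the absolute convergence of the split integrals. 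Two minor further points: the coefficient in your $\domain_0$ display should be $i\epsilon/\lambda_j$ rather than $\epsilon/(i\lambda_j)$ (as written the leading term evaluates to $-1/x$), and the decomposition $\int_{-\infty}^{x}=\int_{U_m^<}+\int_{\mathrm{tail}}$ is exactly the device already used in the proof of Corollary~\ref{cor:phi_eigenvector}, so once the $U_m^<$ integrals are known to vanish the whole argument reduces to the $\domain_0$ case and your proof closes.
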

\begin{proof}
Applying l'H\^opital's rule to \eqref{Phij_int} and using $h'(x;\lambda)=\lambda+u_0(x)$, 
\begin{equation}
\lim_{x \to \infty}x \Phi(x;\lambda) 
=  \lim_{x \to \infty} \frac{ix^2}{ \epsilon + i\lambda x + ixu_0(x) } \sum_{p=1}^{P} \frac{\phi_p(\lambda)}{x-z_p}
=  \frac{1}{\lambda} \sum_{p=1}^{P} \phi_p(\lambda),
\end{equation}
and the desired result then follows from \eqref{Phi_norm}.
\end{proof}
The stronger result $\Phi(x;\lambda) = x^{-1} + O \left( x^{-2}\right)$ as $|x| \to \infty$ (uniformly in the closed upper-half $x$-plane)  is proven in \cite{upcoming-paper}.

Corollaries \ref{cor:base_evans}--\ref{cor:normalize} show that the equation $D(\lambda)=0$ involving the determinant of the $P\times P$ matrix $\mathbf{A}^<(\lambda)$ is exactly the condition that $\lambda=\lambda_j<0$ is an eigenvalue for the rational potential $u_0$ of the form \eqref{u0_def_frac}, and that $\Phi_j(x):=\Phi(x;\lambda_j)$ is the corresponding normalized eigenfunction.

\begin{remark}\label{rmk:integer_cm}
If the coefficient $i c_m/\epsilon$ is a positive integer $N_m$, the integral \eqref{lin_sys_coeff_minus} can be calculated by residues, and thus $A_{mp}^{<}(\lambda)=e^{-i \lambda z_m / \epsilon} \mathcal{P}_p(\lambda)$, where $\mathcal{P}_p(\lambda)$ is a polynomial in $\lambda$ of order at most $N_m$. If this holds for all $m=1,\dots,P$, then $D(\lambda)$ is proportional via a nonvanishing exponential factor to a polynomial of order at most 
$\prod_{m=1}^{P} N_m$.  If $P=1$, this polynomial is a scaled Laguerre polynomial; see \cite{KodamaAblowitzSatsuma82}.
\end{remark}

\subsection{Analytic continuation, Evans function, and phase constants}\label{sec_analytic}

The matrix $\mathbf{A}^<(\lambda)$ defined for $\lambda<0$ by \eqref{lin_sys_coeff_minus} has an analytic continuation to the maximal domain $\mathbb{C}\setminus\mathbb{R}^+$.  Recalling that $h(z;\lambda)$ is dominated for large $|z|$ by the term $\lambda z$, it is clear that this continuation is afforded simply by rotating the infinite ``tails'' of the integration contours $C_m^<$ so that they tend to complex $\infty$ in the direction $\arg(z)=- \arg(i\lambda)$.  In the process of rotating the contours from their initially upward vertical configuration when $\lambda<0$, the function $f(z)$ appearing in the integrand via $h(z;\lambda)$ must be analytically continued through its vertical branch cuts as well.  
Therefore we observe that $D(\lambda):=\det(\mathbf{A}^<(\lambda))$ is an \emph{Evans function} for the spectral problem with rational potential $u_0$ of the form \eqref{u0_def_frac}; it is an analytic function in the domain $\mathbb{C}\setminus\mathbb{R}^+$ (complementary to the continuous spectrum for the problem) whose roots are precisely the eigenvalues.  

If the analytic continuation of $\mathbf{A}^<(\lambda)$ is carried out through the upper-half $\lambda$-plane to the cut $\mathbb{R}^+$, the contour $C^<_m$ will be rotated through the left-half $z$-plane to coincide precisely with $C^>_m$; compare the two panels of Figure~\ref{fig:Upos_fig}.  Thus, the analytic continuation of the matrix $\mathbf{A}^<(\lambda)$ from $\lambda<0$ through the upper half-plane to $\lambda>0$ coincides with the matrix $\mathbf{A}^>(\lambda)$ defined in \eqref{lin_sys_coeff}.
In a similar way, the 
vector $\mathbf{b}^{<}(\lambda)$ defined for $\lambda<0$ with components
\begin{equation}
b_m^{<}(\lambda) := -\lambda \int_{C_m^<} e^{-ih(z;\lambda)/\epsilon } \, dz, \quad \lambda<0,
\label{b-coeff-minus}
\end{equation}
has an analytic continuation to $\mathbb{C}\setminus\mathbb{R}^+$, taking a boundary value on $\mathbb{R}^+$ from the upper half-plane that coincides with the vector $\mathbf{b}^>(\lambda)$ also defined in \eqref{lin_sys_coeff}.  
This shows that indeed for each $x$ with $\I\{x\}\ge 0$, the Jost solution $W_+(x;\lambda)$ given by the formula \eqref{MJost_eqn} is the boundary value on $\mathbb{R}^+$ from $\I\{\lambda\}>0$ of a function $W(x;\lambda)$ analytic for $\lambda\in\mathbb{C}\setminus\mathbb{R}^+$ with the possible exception of the eigenvalues $\lambda<0$ satisfying $D(\lambda)=0$; only at these points do the quantities $\{v_p(\lambda)\}_{p=1}^P$ entering into \eqref{MJost_eqn} become indeterminate as the matrix of the analytic continuation of the system \eqref{lin_sys_Abar} through $0\le\arg(\lambda)<2\pi$ becomes singular.
This idea allows us to deduce the remaining scattering data corresponding to $u_0$ of the form \eqref{u0_def_frac}, namely the phase constants $\{\gamma_j\}_{j=1}^N$ corresponding to the negative eigenvalues $\{\lambda_j\}_{j=1}^N$.  The first step is to determine how analytic continuation of the system \eqref{lin_sys_Abar} fails near an eigenvalue $\lambda=\lambda_j$.
\begin{proposition}\label{vk_Lau_exp}
Let $\mathbf{v}(\lambda)$ be the unique solution of $\mathbf{A}^<(\lambda)\mathbf{v}(\lambda)=\mathbf{b}^<(\lambda)$ for each $\lambda\in\mathbb{C}\setminus\mathbb{R}^+$ for which $D(\lambda)\neq 0$.  Then each component $v_p(\lambda)$ is analytic in $\mathbb{C}\setminus\mathbb{R}^+$ except at the eigenvalues $\{\lambda_j\}_{j=1}^N$ which are simple poles, with corresponding
Laurent expansion
\begin{equation}\label{v_Laurent_expansion}
v_p(\lambda)= - i \epsilon \frac{ \phi_p(\lambda_j)}{\lambda-\lambda_j} +  (z_p + \Gamma_j) \phi_p(\lambda_j) + O\left( \lambda-\lambda_j \right)
\quad
\text{as}
\quad
\lambda\to\lambda_j
\end{equation}
where $\Gamma_j$ is a constant (independent of $p$) given by
\begin{equation}\label{base_phase_constant}
\Gamma_j := -\frac{i \epsilon}{2 \lambda_j} -\frac{1}{2 \lambda_j} \sum_{p=1}^{P} z_p \phi_p(\lambda_j) -\frac{i \epsilon}{2} \frac{\mathbf{m}^\mathsf{T}{\mathbf{b}^{<}}'(\lambda_j)}{\mathbf{m}^\mathsf{T} \mathbf{b}^{<}(\lambda_j)},
\end{equation}
and where $\{\phi_p(\lambda_j)\}_{p=1}^{P}$ are the components of the right nullvector of $\mathbf{A}^{<}(\lambda_j)$ normalized by \eqref{Phi_norm} and $\mathbf{m}$ is a nonzero left nullvector of $\mathbf{A}^{<}(\lambda_j)$. 
\end{proposition}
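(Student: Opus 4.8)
The plan is to obtain \eqref{v_Laurent_expansion} directly from the linear system $\mathbf{A}^{<}(\lambda)\mathbf{v}(\lambda)=\mathbf{b}^{<}(\lambda)$ by a standard singular-perturbation analysis near $\lambda=\lambda_j$, the entire computation being driven by one differential identity for $\mathbf{A}^{<}$. Writing $e^{-ih(z;\lambda)/\epsilon}=e^{-i\lambda z/\epsilon}e^{-if(z)/\epsilon}$ one has $\partial_\lambda e^{-ih(z;\lambda)/\epsilon}=-(iz/\epsilon)e^{-ih(z;\lambda)/\epsilon}$, so differentiating \eqref{lin_sys_coeff_minus} under the integral sign (legitimate since the contours $C_m^{<}$ may be held fixed for $\lambda$ in a complex neighborhood of $\lambda_j<0$, keeping the integrals absolutely convergent), using the partial-fraction identity $z/(z-z_p)=1+z_p/(z-z_p)$, and recognizing $\int_{C_m^{<}}e^{-ih(z;\lambda)/\epsilon}\,dz=-b_m^{<}(\lambda)/\lambda$, I get the matrix identity
\begin{equation*}
{\mathbf{A}^{<}}'(\lambda)=\frac{i}{\epsilon\lambda}\,\mathbf{b}^{<}(\lambda)\mathbf{1}^\mathsf{T}-\frac{i}{\epsilon}\,\mathbf{A}^{<}(\lambda)\mathbf{Z},\qquad \mathbf{1}:=(1,\dots,1)^\mathsf{T},\quad \mathbf{Z}:=\diag(z_1,\dots,z_P),
\end{equation*}
and differentiating once more yields a companion formula for ${\mathbf{A}^{<}}''(\lambda)$ in terms of $\mathbf{b}^{<},{\mathbf{b}^{<}}',\mathbf{A}^{<},{\mathbf{A}^{<}}'$.

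Next I recall (Corollaries~\ref{cor:base_evans}--\ref{cor:phi_eigenvector} and \cite{upcoming-paper}) that $\dim\ker\mathbf{A}^{<}(\lambda_j)=1$ and that $\lambda_j$ is a simple zero of $D$; by Jacobi's formula $D'(\lambda_j)$ is a nonzero multiple of $\mathbf{m}^\mathsf{T}{\mathbf{A}^{<}}'(\lambda_j)\boldsymbol{\phi}(\lambda_j)$, which combined with the identity above equals a nonzero multiple of $\mathbf{m}^\mathsf{T}\mathbf{b}^{<}(\lambda_j)$, so simplicity of the zero is the same as $\mathbf{m}^\mathsf{T}\mathbf{b}^{<}(\lambda_j)\neq0$ (which is also what makes \eqref{base_phase_constant} meaningful). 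Hence $\mathbf{A}^{<}(\lambda)^{-1}$, and therefore $\mathbf{v}(\lambda)$, has at worst a simple pole at $\lambda_j$: write $\mathbf{v}(\lambda)=\mu^{-1}\mathbf{v}_{-1}+\mathbf{v}_0+\mu\mathbf{v}_1+\cdots$ with $\mu:=\lambda-\lambda_j$, insert this together with $\mathbf{A}^{<}(\lambda)=\mathbf{A}_0+\mu\mathbf{A}_1+\mu^2\mathbf{A}_2+\cdots$ and $\mathbf{b}^{<}(\lambda)=\mathbf{b}_0+\mu\mathbf{b}_1+\cdots$ into $\mathbf{A}^{<}\mathbf{v}=\mathbf{b}^{<}$, and match powers of $\mu$. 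At order $\mu^{-1}$: $\mathbf{A}_0\mathbf{v}_{-1}=\mathbf{0}$, so $\mathbf{v}_{-1}=\kappa\,\boldsymbol{\phi}(\lambda_j)$. At order $\mu^0$: $\mathbf{A}_0\mathbf{v}_0=\mathbf{b}_0-\kappa\mathbf{A}_1\boldsymbol{\phi}(\lambda_j)$; left-multiplying by the left nullvector $\mathbf{m}^\mathsf{T}$ and invoking the identity for $\mathbf{A}_1={\mathbf{A}^{<}}'(\lambda_j)$ together with the normalization $\mathbf{1}^\mathsf{T}\boldsymbol{\phi}(\lambda_j)=\lambda_j$ from \eqref{Phi_norm} collapses everything to a multiple of $\mathbf{m}^\mathsf{T}\mathbf{b}_0$ and forces $\kappa=-i\epsilon$. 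Feeding this back, the order-$\mu^0$ equation simplifies (again by the $\mathbf{A}_1$ identity) to $\mathbf{A}_0\bigl(\mathbf{v}_0-\mathbf{Z}\boldsymbol{\phi}(\lambda_j)\bigr)=\mathbf{0}$, so $\mathbf{v}_0=(\mathbf{Z}+\Gamma_j\mathbf{I})\boldsymbol{\phi}(\lambda_j)$ for some scalar $\Gamma_j$. This already gives \eqref{v_Laurent_expansion}: residue $-i\epsilon\,\phi_p(\lambda_j)$ and $O(1)$ term $(z_p+\Gamma_j)\phi_p(\lambda_j)$, with $\Gamma_j$ manifestly independent of $p$.

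It then remains to identify $\Gamma_j$, which I read off from the order-$\mu^1$ solvability condition: left-multiplying $\mathbf{A}_0\mathbf{v}_1=\mathbf{b}_1-\mathbf{A}_1\mathbf{v}_0-\mathbf{A}_2\mathbf{v}_{-1}$ by $\mathbf{m}^\mathsf{T}$ gives $\mathbf{m}^\mathsf{T}\mathbf{A}_1\mathbf{v}_0+\mathbf{m}^\mathsf{T}\mathbf{A}_2\mathbf{v}_{-1}=\mathbf{m}^\mathsf{T}\mathbf{b}_1$. Substituting $\mathbf{v}_0=(\mathbf{Z}+\Gamma_j\mathbf{I})\boldsymbol{\phi}(\lambda_j)$ and $\mathbf{v}_{-1}=-i\epsilon\,\boldsymbol{\phi}(\lambda_j)$, applying the identities for $\mathbf{A}_1$ and $\mathbf{A}_2=\frac12{\mathbf{A}^{<}}''(\lambda_j)$, and using $\mathbf{m}^\mathsf{T}\mathbf{A}_0=\mathbf{0}^\mathsf{T}$ and $\mathbf{1}^\mathsf{T}\boldsymbol{\phi}(\lambda_j)=\lambda_j$, every term involving $\mathbf{A}_1$ or $\mathbf{A}_2$ reduces to a scalar multiple of $\mathbf{m}^\mathsf{T}\mathbf{b}_0$, $\mathbf{m}^\mathsf{T}\mathbf{b}_1$, or $\sum_p z_p\phi_p(\lambda_j)$; dividing through by $\mathbf{m}^\mathsf{T}\mathbf{b}_0\neq0$ and solving for $\Gamma_j$ reproduces exactly \eqref{base_phase_constant}. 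The calculation is entirely mechanical once the differential identity is in hand; I expect the only steps needing genuine care to be the justification that the contours $C_m^{<}$ can be frozen near $\lambda_j$ so that differentiation under the integral sign is valid, and the appeal to simplicity of the zero of $D$ at $\lambda_j$, which is what guarantees both that $\mathbf{v}$ has a true simple pole there and that $\mathbf{m}^\mathsf{T}\mathbf{b}^{<}(\lambda_j)\neq0$.
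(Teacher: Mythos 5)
Your proposal is correct and follows essentially the same route as the paper: the differential identity ${\mathbf{A}^<}'(\lambda)=\frac{i}{\epsilon\lambda}\mathbf{b}^<(\lambda)\mathbf{1}^\mathsf{T}-\frac{i}{\epsilon}\mathbf{A}^<(\lambda)\mathbf{Z}$ (the paper's \eqref{Amatplj_eqn}), Laurent expansion of $\mathbf{v}$ about $\lambda_j$, left-nullvector solvability at orders $(\lambda-\lambda_j)^0$ and $(\lambda-\lambda_j)^1$ giving the residue $-i\epsilon\,\boldsymbol{\phi}(\lambda_j)$, the particular solution $\mathbf{Z}\boldsymbol{\phi}(\lambda_j)$, and the formula \eqref{base_phase_constant} for $\Gamma_j$, with the same external inputs ($\rank\mathbf{A}^<(\lambda_j)=P-1$, $\mathbf{m}^\mathsf{T}\mathbf{b}^<(\lambda_j)\neq 0$) drawn from \cite{upcoming-paper}. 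The only cosmetic difference is that you obtain the simple-pole character of $\mathbf{v}$ from simplicity of the zero of $D$ via Jacobi's formula (which your identity shows is equivalent to $\mathbf{m}^\mathsf{T}\mathbf{b}^<(\lambda_j)\neq0$), whereas the paper cites that fact directly.
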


We include the proof of Proposition~\ref{vk_Lau_exp} in the Appendix.  The following result then shows that the phase constant associated with each eigenvalue $\lambda_j<0$ by means of the Laurent expansion \eqref{LaurentW} is given by $\gamma_j=\Gamma_j$.

\begin{corollary}\label{W_Laurent_exp}
The function $W(x;\lambda)$ whose boundary value from $\I\{\lambda\}>0$ on $\mathbb{R}^+$ is the Jost solution $W_{+}(x;\lambda)$ is meromorphic on $\mathbb{C}\setminus\mathbb{R}^+$ with simple poles only at the eigenvalues $\{\lambda_j\}_{j=1}^N$, at each of which it has a Laurent expansion of the form \eqref{LaurentW} with $\Phi_j(x):=\Phi(x;\lambda_j)$ defined as in Section~\ref{sec_eignfunc} and with $\gamma_j=\Gamma_j$ defined as in \eqref{base_phase_constant}.
\end{corollary}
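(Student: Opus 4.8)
The plan is to substitute the Laurent expansion \eqref{v_Laurent_expansion} of the coefficients $v_p(\lambda)$, established in Proposition~\ref{vk_Lau_exp}, directly into the integral representation \eqref{MJost_eqn} for $W=W_+$ and to read off the first two terms of the resulting Laurent expansion of $W(x;\lambda)$ about an eigenvalue $\lambda=\lambda_j$. Every other occurrence of $\lambda$ in \eqref{MJost_eqn} is analytic near $\lambda_j$: the explicit term $\lambda$ is entire, the factors $e^{\pm ih(\cdot;\lambda)/\epsilon}$ with $h(z;\lambda)=\lambda z+f(z)$ are entire in $\lambda$, and the rotated integration contour can be taken independent of $\lambda$ for $\lambda$ in a neighborhood of $\lambda_j<0$. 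Thus Proposition~\ref{vk_Lau_exp} already forces $W(x;\lambda)$ to be meromorphic on $\mathbb{C}\setminus\mathbb{R}^+$ with at most a simple pole at each $\lambda_j$, and the remaining task is to compute the two leading Laurent coefficients and match them to \eqref{LaurentW}. Moreover, since the asserted relation is an identity between functions analytic in $x$ for $\I\{x\}>0$, it is enough to verify it for $x$ in the half-plane $\domain_0$, where the integration contour in \eqref{MJost_eqn} is a simple ray bent into the upper half-plane at infinity and avoiding all the poles $z_p$.

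Writing $t:=\lambda-\lambda_j$, $h_j(z):=h(z;\lambda_j)$, and $S(z):=\sum_{p=1}^{P}\phi_p(\lambda_j)/(z-z_p)$, the first computational step is the algebraic identity
\begin{equation*}
\lambda+\sum_{p=1}^{P}\frac{v_p(\lambda)}{z-z_p}=\left(-\frac{i\epsilon}{t}+z+\Gamma_j\right)S(z)+O(t),
\end{equation*}
which I would derive from \eqref{v_Laurent_expansion} using the normalization $\sum_p\phi_p(\lambda_j)=\lambda_j$ from \eqref{Phi_norm} together with the partial-fraction identity $z_p/(z-z_p)=-1+z/(z-z_p)$; the constant $-\lambda_j$ produced by the latter cancels the explicit $\lambda$ at order $t^0$, which is the arithmetic coincidence that makes the calculation work. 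The second step is to expand $e^{-ih(z;\lambda)/\epsilon}=e^{-ih_j(z)/\epsilon}\bigl(1-itz/\epsilon+O(t^2)\bigr)$ and $e^{ih(x;\lambda)/\epsilon}=e^{ih_j(x)/\epsilon}\bigl(1+itx/\epsilon+O(t^2)\bigr)$ in \eqref{MJost_eqn} and multiply out; at order $t^0$ the terms linear in $z$ from the expansions combine with the $t^{-1}$ term so that the $z$-dependence disappears and $\Gamma_j$ is promoted to $x+\Gamma_j$. Integrating term by term and recognizing from \eqref{Phij_int} and the normalization of Corollary~\ref{cor:phi_eigenvector} that
\begin{equation*}
-\frac{i}{\epsilon}e^{ih_j(x)/\epsilon}\int_{-\infty}^{x}e^{-ih_j(z)/\epsilon}S(z)\,dz=\Phi(x;\lambda_j)=\Phi_j(x),
\end{equation*}
one arrives at $W(x;\lambda)=-i\epsilon\,\Phi_j(x)/(\lambda-\lambda_j)+(x+\Gamma_j)\Phi_j(x)+O(\lambda-\lambda_j)$, which is precisely \eqref{LaurentW} with $\gamma_j=\Gamma_j$.

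The step I expect to require the most care is the justification of the term-by-term integration along the contour from $-\infty$ — rotated into the upper half-plane, as is appropriate for $\lambda$ near $\lambda_j<0$ — to $x$: one must confirm that for $\lambda$ close to $\lambda_j$ this contour may be taken to coincide with the one defining $\Phi_j$ in \eqref{Phij_int}, and that the $O(t)$ remainders are uniform in $z$ along it. Absolute convergence of all the integrals follows from the exponential decay of $e^{-ih_j(z)/\epsilon}$ along the rotated tails, controlled by Lemma~\ref{prop_f_to_infty}. The fact that the coefficients of $t^{-1}$ and $t^{0}$ integrate, on the contour appropriate to $\domain_0$, to the very functions of $x$ one would obtain on the contours $\ell_0(x)$ or $U_m^<$ relevant elsewhere in the upper half-plane — i.e.\ that no residues are picked up under the deformations — is exactly the content of the vanishing-period conditions \eqref{Uzj_cond} applied to the nullvector $\boldsymbol{\phi}(\lambda_j)$ of $\mathbf{A}^<(\lambda_j)$, which hold because $D(\lambda_j)=0$; this is also what guarantees that the residue $-i\epsilon\,\Phi_j(x)$ genuinely lies in $\mathbb{H}^+$, consistently with \eqref{LaurentW} and Corollary~\ref{cor:phi_eigenvector}.
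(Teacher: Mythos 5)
Your proposal is correct and follows essentially the same route as the paper: substitute the Laurent expansion \eqref{v_Laurent_expansion} into the integral representation \eqref{MJost_eqn}, expand the exponentials in $\lambda-\lambda_j$, and use the normalization \eqref{Phi_norm} together with \eqref{Phij_int} to identify the residue and constant term as $-i\epsilon\Phi_j(x)$ and $(x+\Gamma_j)\Phi_j(x)$. The paper compresses this into "substitution ... yields, after some manipulation, \eqref{LaurentW}," so your explicit partial-fraction cancellation and the remarks on contour rotation and uniformity of the $O(\lambda-\lambda_j)$ remainders simply fill in the details the paper omits.
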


\begin{proof}
The function $W(x;\lambda)$ is given by \eqref{MJost_eqn} with the coefficients $\{v_p(\lambda)\}_{p=1}^P$ obtained by solving $\mathbf{A}^<(\lambda)\mathbf{v}(\lambda)=\mathbf{b}^<(\lambda)$, and with the contour of integration rotated appropriately to ensure absolute convergence of the integral. Substitution of \eqref{v_Laurent_expansion} into \eqref{MJost_eqn} and using \eqref{Phij_int} yields, after some manipulation, \eqref{LaurentW} in which $\gamma_j=\Gamma_j$.
\end{proof}

\section*{Acknowledgments}
The authors gratefully acknowledge support from the National Science Foundation (NSF) under grant DMS-1206131. 

\begin{appendix}
\section*{Appendix: Proofs}\label{appdx_proofs}


\begin{proof}[Proof of Proposition~\ref{analyticity1_M}:]
Fix $\lambda>0$.  The conditions of the proposition are equivalent to $W_+(x)=W_{+}(x;\lambda)$ being analytic at each possible singularity $z_1,\dots,z_P$ in the upper-half $x$-plane. 
For any $m=1,\dots,P$, we use \eqref{def_Hj} defining $H_m(x)=H_m(x;\lambda)$ in \eqref{MJost_eqn} to express
$W_+(x)$ in the form
\begin{equation}
W_+(x)=g_m(x)(i(x-z_m))^{ic_m/\epsilon}\int_{\ell_0(x)}
(i(z-z_m))^{-ic_m/\epsilon-1}H_m(z)\,dz,
\label{eq:W-plus-minus-one-rewrite}
\end{equation}
where we have also introduced a function $g_m(x)$ analytic and nonvanishing at $x=z_m$
defined precisely by the identity $\epsilon g_m(x)(i(x-z_m))^{ic_m/\epsilon}=ie^{ih(x)/\epsilon}$.
We will now use Cauchy's Theorem to deform the path $\ell_0(x)$ of integration so as to pass directly through the branch point $z_m$ on the way to a nearby point $x\in\domain$.  While $H_m(z)$ is analytic at $z=z_m$, the other factor in the integrand of \eqref{eq:W-plus-minus-one-rewrite} may not be integrable at $z=z_m$.

First suppose that $\R\{ i c_m/\epsilon\} < 0$.  The factor $(i(z-z_m))^{-ic_m/\epsilon-1}$ is integrable at $z=z_m$ so we can apply Cauchy's Theorem to 
obtain
\begin{multline}
W_{+}(x) = g_m(x)(i(x-z_m))^{ ic_m/\epsilon} \int_{\ell_0(z_m)}(i(z-z_m))^{- ic_m/\epsilon-1}H_m(z)\,dz \\ {}+g_m(x)(i(x-z_m))^{ic_m/\epsilon}\int_{z_m}^x
(i(z-z_m))^{-ic_m/\epsilon-1}H_m(z)\,dz.
\label{eq:W-plus-deform-good-case}
\end{multline}
The second term on the right-hand side is analytic for $x$ near $z_m$.  Indeed, substituting for $H_m(z)$ its power series in $i(z-z_m)$, 
\begin{equation}
H_m(z)=\sum_{k=0}^\infty \eta_{mk}(i(z-z_m))^k,
\label{eq:Hm-series}
\end{equation}
for small enough $|z-z_m|$, term-by-term integration yields
\begin{multline}\label{gm_term-by-term}
g_m(x)(i(x-z_m))^{ic_m/\epsilon}\int_{z_m}^x(i(z-z_m))^{-ic_m/\epsilon-1}H_m(z)\,dz\\
{}=-i g_m(x)\sum_{k=0}^\infty\frac{\eta_{mk}}{k-ic_m/\epsilon}(i(x-z_m))^k.
\end{multline}
Now, \eqref{gm_term-by-term} is analytic at $x=z_m$ as it is a product of an analytic function and a convergent power series.  So, requiring that $W_{+}(x)$ be analytic at $x=z_m$ is equivalent to demanding that the first term on the right-hand side of \eqref{eq:W-plus-deform-good-case} be analytic at $x=z_m$.  But this term is the product of (i) $g_m(x)$ nonvanishing at $x=z_m$, (ii) $(i(x-z_m))^{ic_m/\epsilon}$ which blows up as $x\to z_m$  since $\R\{ic_m/\epsilon\}<0$, and (iii) the $x$-independent integral on $\ell_0(z_m)$. Hence, the analyticity of $W_+(x)$ at $x=z_m$ is equivalent to the vanishing of this integral (cf.\@ \eqref{lzj_cond}).

Next let $\R\{ ic_m/\epsilon\} \geq 0$ with $\omega_m\neq 0$.  Since $(i(z-z_m))^{-ic_m/\epsilon-1}$ is branched and nonintegrable at $z=z_m$,  we integrate by parts $\mu_m+1$ times (note $\mu_m\geq 0$) to restore integrability before deforming the contour $\ell_0(x)$.  With $K_m:=i^{\mu_m+1}\Gamma(\omega_m)/\Gamma(\omega_m+\mu_m+1)\neq 0$ we have
\begin{equation}\label{ddz(z-zm)}
(i(z-z_m))^{-ic_m/\epsilon-1}=K_m\frac{d^{\mu_m+1}}{dz^{\mu_m+1}}(i(z-z_m))^{-\omega_m}.
\end{equation}
Integrating \eqref{eq:W-plus-minus-one-rewrite} by parts using \eqref{ddz(z-zm)}, note that $(i(z-z_m))^{-\omega_m}$ is integrable at $z=z_m$ ($0\leq \R\{\omega_m\}<1$), so Cauchy's Theorem gives
\begin{multline}
W_+(x)= (-1)^{\mu_m+1}K_mg_m(x)(i(x-z_m))^{ic_m/\epsilon}\int_{\ell_0(z_m)}\frac{H_m^{(\mu_m+1)}(z)}{(i(z-z_m))^{\omega_m}}\,dz\\
{} +(-1)^{\mu_m+1}K_mg_m(x)(i(x-z_m))^{ic_m/\epsilon}\int_{z_m}^x
\frac{H_m^{(\mu_m+1)}(z)}{(i(z-z_m))^{\omega_m}}\,dz\\
{} +K_m g_m(x)(i(x-z_m))^{ic_m/\epsilon}\sum_{k=0}^{\mu_m} (-1)^k H_m^{(k)}(x)\frac{d^{\mu_m-k}}{dx^{\mu_m-k}}\frac{1}{(i(x-z_m))^{\omega_m}}.
\label{eq:three-lines}
\end{multline}
Now, recalling $ic_m/\epsilon=\mu_m+\omega_m$, note that the terms on the third line of \eqref{eq:three-lines} are analytic at $x=z_m$.  Likewise, with the help of the uniformly convergent power series \eqref{eq:Hm-series} and term-by-term differentiation and integration (assuming only that $|x-z_m|$ is sufficiently small), the second line of \eqref{eq:three-lines} is analytic at $x=z_m$.  The first line of the right-hand side of \eqref{eq:three-lines} is the product of (i) $K_mg_m(x)$ analytic and nonzero at $x=z_m$, (ii) a branched factor $(i(x-z_m))^{ic_m/\epsilon}$, and (iii) the definite integral on $\ell_0(z_m)$. This term of $W_+(x)$ is therefore analytic if and only if the integral on $\ell_0(z_m)$ vanishes (cf.\@ \eqref{lzj_der_cond}).  

If $\R\{ic_m/\epsilon\}\ge 0$ but instead $\omega_m=0$, then \eqref{eq:W-plus-minus-one-rewrite} 
becomes
\begin{equation}
W_+(x)=-ig_m(x)(x-z_m)^{\mu_m}\int_{\ell_0(x)}\frac{H_m(z)}{(z-z_m)^{\mu_m+1}}\,dz,\;\; \mu_m=0,1,2,3,\dots.
\end{equation}
Since $g_m$ and $H_m$ are analytic at $z_m$, $W_+(x)$ will be analytic at $z_m$ exactly when the integral factor is single-valued as a function of $x$ (cf.\@ \eqref{lzj_Log_cond}).
\end{proof}


\begin{proof}[Proof of Proposition~\ref{vk_Lau_exp}:]
Since $\mathbf{A}(\lambda)=\mathbf{A}^<(\lambda)$ given by \eqref{lin_sys_coeff} and 
$\mathbf{b}(\lambda)=\mathbf{b}^<(\lambda)$ given by
\eqref{b-coeff-minus} are analytic for $\lambda\in\mathbb{C}\setminus\mathbb{R}^+$, the solution 
$\mathbf{v}(\lambda)$ of the system $\mathbf{A}(\lambda)\mathbf{v}(\lambda)=\mathbf{b}(\lambda)$ will be analytic in the same domain with the possible exception of the eigenvalues at which $D(\lambda)=0$.  In \cite{upcoming-paper} it is shown that these isolated singular points are simple poles of $\mathbf{v}(\lambda)$, which therefore has a Laurent expansion about $\lambda=\lambda_j$ of the general form
\begin{equation}
\mathbf{v}(\lambda)=\frac{\mathbf{v}_j^{[-1]}}{\lambda-\lambda_j} +\mathbf{v}_j^{[0]} + (\lambda-\lambda_j)\mathbf{v}_j^{[1]} + O\left((\lambda-\lambda_j)^2\right)
\quad
\text{as}
\quad
\lambda \to \lambda_j.
\end{equation}
Substitution into $\mathbf{A}(\lambda)\mathbf{v}(\lambda)=\mathbf{b}(\lambda)$ implies:
\begin{align}
\mathbf{A}(\lambda_j)\mathbf{v}_j^{[-1]}&=\mathbf{0}, 
\label{O-1}\\
\mathbf{A}(\lambda_j)\mathbf{v}_j^{[0]}&=\mathbf{b}(\lambda_j)-\mathbf{A}'(\lambda_j)\mathbf{v}^{[-1]}_j,  
\label{O0}\\
 \mathbf{A}(\lambda_j)\mathbf{v}_j^{[1]}&=\mathbf{b}'(\lambda_j)-\frac{1}{2}\mathbf{A}''(\lambda_j)\mathbf{v}^{[-1]}_j -\mathbf{A}'(\lambda_j)\mathbf{v}_j^{[0]}. 
 \label{O1}
\end{align}
Since $\mathbf{A}(\lambda_j)$ is singular and has maximal rank $P-1$ (see \cite{upcoming-paper}), the first two equations \eqref{O-1} and \eqref{O0} have the general solutions
\begin{equation}\label{vj-1_c0}
\mathbf{v}_j^{[-1]}=\alpha_j^{[-1]} \boldsymbol{\phi}(\lambda_j)\quad\text{and}\quad
\mathbf{v}_j^{[0]}=\mathbf{p}+\alpha_j^{[0]}\boldsymbol{\phi}(\lambda_j),
\end{equation}
where $\mathbf{p}$ is a particular solution of \eqref{O0} and $\boldsymbol{\phi}(\lambda_j)$ is the nullvector of $\mathbf{A}(\lambda_j)$ normalized according to \eqref{Phi_norm}.  Here $\alpha_j^{[-1]}$ and $\alpha_j^{[0]}$ are constants determined by the fact that the right-hand sides of the equations
\eqref{O0} and \eqref{O1} lie in the range of the singular matrix $\mathbf{A}(\lambda_j)$.  Therefore, multiplying 
these latter two equations on the left by $\mathbf{m}^\mathsf{T}$ where $\mathbf{m}$ is any nonzero left nullvector of $\mathbf{A}(\lambda_j)$ (unique up to scaling) and using \eqref{vj-1_c0} gives  
\begin{equation}\label{c0_choice}
\alpha_j^{[-1]}=\frac{\mathbf{m}^\mathsf{T}\mathbf{b}(\lambda_j)}{\mathbf{m}^\mathsf{T}\mathbf{A}'(\lambda_j)
\boldsymbol{\phi}(\lambda_j)}
\end{equation}
and
\begin{equation}\label{c1_choice}
\alpha_j^{[0]}  = \frac{\mathbf{m}^\mathsf{T}\left( \mathbf{b}'(\lambda_j)- \tfrac{1}{2}\alpha_j^{[-1]} \mathbf{A}''(\lambda_j) \boldsymbol{\phi}(\lambda_j) -\mathbf{A}'(\lambda_j) \mathbf{p}\right)}{\mathbf{m}^{\mathsf{T}} \mathbf{A}'(\lambda_j)\boldsymbol{\phi}(\lambda_j)}.
\end{equation}
To simplify these formulas,
we differentiate \eqref{lin_sys_coeff_minus} to obtain the identity
\begin{equation}\label{Amatplj_eqn}
\mathbf{A}'(\lambda) = \frac{i}{\epsilon \lambda} \mathbf{B}(\lambda) - \frac{i}{\epsilon} \mathbf{A}(\lambda) \mathbf{Z},
\end{equation}
where $\mathbf{B}(\lambda) := \left[ \mathbf{b}(\lambda)  \cdots  \mathbf{b}(\lambda) \right]$ and $\mathbf{Z} := \diag(z_1, \ldots, z_N)$. Since $\mathbf{m}^\mathsf{T}\mathbf{A}(\lambda_j)=\mathbf{0}^{\mathsf{T}}$, 
\eqref{c0_choice} becomes
\begin{equation}\label{c0_simple}
\alpha_j^{[-1]}
= -i\epsilon\frac{\lambda_j\mathbf{m}^\mathsf{T}\mathbf{b}(\lambda_j)}{\mathbf{m}^\mathsf{T} \mathbf{B}(\lambda_j) \boldsymbol{\phi}(\lambda_j)}
= -i\epsilon\frac{\mathbf{m}^\mathsf{T}\mathbf{b}(\lambda_j)}{\mathbf{m}^\mathsf{T} \mathbf{b}(\lambda_j)}
= -i \epsilon,
\end{equation}
due to \eqref{Phi_norm}. (That $\mathbf{m}^\mathsf{T} \mathbf{b}(\lambda_j) \neq 0$ is proven in \cite{upcoming-paper}.)
Now, substituting \eqref{vj-1_c0}, \eqref{Amatplj_eqn}, and \eqref{c0_simple} into \eqref{O0} and using \eqref{Phi_norm} again yields
\begin{equation}
\mathbf{A}(\lambda_j)\mathbf{p} = \mathbf{A}(\lambda_j) \mathbf{Z} \boldsymbol{\phi}(\lambda_j),
\end{equation}
and hence the particular solution $\mathbf{p}$ may be taken as $\mathbf{p} = \mathbf{Z} \boldsymbol{\phi}(\lambda_j)$.
Finally, substituting \eqref{Amatplj_eqn} and its derivative into \eqref{c1_choice} shows that
$\alpha_j^{[0]}=\Gamma_j$ as defined by \eqref{base_phase_constant}.
\end{proof}
\end{appendix}

\bibliographystyle{sapm}

\begin{thebibliography}{10}
\providecommand{\url}[1]{\texttt{#1}}
\providecommand{\urlprefix}{URL }
\expandafter\ifx\csname urlstyle\endcsname\relax
  \providecommand{\doi}[1]{doi:\discretionary{}{}{}#1}\else
  \providecommand{\doi}{doi:\discretionary{}{}{}\begingroup
  \urlstyle{rm}\Url}\fi
\providecommand{\eprint}[2][]{\url{#2}}

\bibitem{Benjamin67}
\textsc{T.~B. Benjamin}, Internal waves of permanent form in fluids of great
  depth, \emph{J. Fluid Mech.} 29:559--592 (1967).

\bibitem{DavisAcrivos67}
\textsc{R.~E. Davis} and \textsc{A.~Acrivos}, Solitary internal waves in deep
  water, \emph{J. Fluid Mech.} 29:593--607 (1967).

\bibitem{Ono75}
\textsc{H.~Ono}, Algebraic solitary waves in stratified fluids, \emph{J. Phys.
  Soc. Jpn.} 39:1082--1091 (1975).

\bibitem{KodamaAblowitzSatsuma82}
\textsc{Y.~Kodama}, \textsc{M.~J. Ablowitz}, and \textsc{J.~Satsuma}, Direct
  and inverse scattering problems of the nonlinear intermediate long wave
  equation, \emph{J. Math. Phys.} 23:564--576 (1982).

\bibitem{Xuthesis}
\textsc{Z.~Xu}, \emph{Asymptotic analysis and numerical analysis of the
  {B}enjamin-{O}no equation}, Ph.D. thesis The University of Michigan (2010).

\bibitem{upcoming-paper2}
\textsc{P.~D. Miller} and \textsc{A.~N. Wetzel}, The scattering transform for
  the {B}enjamin-{O}no equation in the small-dispersion limit (2015), In
  preparation.

\bibitem{BockKruskal79}
\textsc{T.~L. Bock} and \textsc{M.~D. Kruskal}, A two-parameter {M}iura
  transformation of the {B}enjamin-{O}no equation, \emph{Phys. Lett. A}
  74:173--176 (1979).

\bibitem{Nakamura79b}
\textsc{A.~Nakamura}, B{\"a}cklund transform and conservation laws of the
  {B}enjamin-{O}no equation, \emph{J. Phys. Soc. Jpn.} 47:1335--1340 (1979).

\bibitem{FokasAblowitz83}
\textsc{A.~S. Fokas} and \textsc{M.~J. Ablowitz}, The inverse scattering
  transform for the {B}enjamin-{O}no equation: A pivot to multidimensional
  problems, \emph{Stud. Appl. Math.} 68:1--10 (1983).

\bibitem{KaupMatsuno98}
\textsc{D.~J. Kaup} and \textsc{Y.~Matsuno}, The inverse scattering transform
  for the {B}enjamin-{O}no equation, \emph{Stud. Appl. Math.} 101:73--98
  (1998).

\bibitem{Gakhov66}
\textsc{F.~D. Gakhov}, \emph{Boundary Value Problems}, Dover Publications,
  1990.

\bibitem{Muskhelishvili53}
\textsc{N.~I. Muskhelishvili}, \emph{Singular Integral Equations}, Dover
  Publications, 2008.

\bibitem{ReedSimon4}
\textsc{M.~Reed} and \textsc{B.~Simon}, \emph{Methods of Modern Mathematical
  Physics IV: Analysis of Operators}, Academic Press, 1978.

\bibitem{AblowitzFokasAnderson1983}
\textsc{M.~J. Ablowitz}, \textsc{A.~S. Fokas}, and \textsc{R.~L. Anderson}, The
  direct linearizing transform and the {B}enjamin-{O}no equation, \emph{Phys.
  Lett. A} 93:375--378 (1983).

\bibitem{SantiniAblowitzFokas1984}
\textsc{P.~Santini}, \textsc{M.~J. Ablowitz}, and \textsc{A.~S. Fokas}, On the
  limit from the intermediate long-wave equation to the {B}enjamin-{O}no
  equation, \emph{J. Math. Phys.} 25:892--899 (1984).

\bibitem{AndersonTaflin1985}
\textsc{R.~L. Anderson} and \textsc{E.~Taflin}, The {B}enjamin-{O}no equation
  --- {R}ecursivity of linearization maps --- {L}ax pairs, \emph{Lett. Math.
  Phys.} 9:299--311 (1985).

\bibitem{upcoming-paper}
\textsc{A.~N. Wetzel}, \emph{Three stratified fluid models: quasi-geostrophy,
  {B}enjamin-{O}no, and tidal resonance}, Ph.D. thesis The University of
  Michigan (2015), In preparation.

\end{thebibliography}

\end{document}